\newtheorem{theorem}{Theorem}
\newtheorem{definition}{Definition}
\newtheorem{corollary}{Corollary}
\newtheorem{proposition}[theorem]{Proposition}{\bfseries}{\itshape}
\newtheorem{lemma}{Lemma}
{\bfseries}{\itshape}
\newtheorem{notation}{Notation}
\renewcommand{\leq}{\leqslant}
\renewcommand{\le}{\leqslant}
\renewcommand{\geq}{\geqslant}
\renewcommand{\ge}{\geqslant}
\newcommand{\eqdef}{\stackrel{\text{def}}{=}}
\newcommand{\F}{\mathbb{F}}
\newcommand{\K}{\mathbb{K}}
\newcommand{\fq}{\F_{q}}
\newcommand{\fqm}{\F_{q^m}}
\newcommand{\word}[1]{\boldsymbol{#1}}
\newcommand{\bv}{\word{b}}
\newcommand{\cv}{\word{c}}
\newcommand{\ev}{\word{e}}
\newcommand{\gv}{\word{g}}
\newcommand{\mv}{\word{m}}
\newcommand{\xv}{\word{x}}
\newcommand{\zv}{\word{z}}
\newcommand{\mat}[1]{\boldsymbol{#1}}
\newcommand{\Am}{\mat{A}}
\newcommand{\Bm}{\mat{B}}
\newcommand{\Gm}{\mat{G}}
\newcommand{\Jm}{\mat{J}}
\renewcommand{\Im}{\mat{I}}
\newcommand{\Mm}{\mat{M}}
\newcommand{\Hm}{\mat{H}}
\newcommand{\Pm}{\mat{P}}
\newcommand{\Qm}{\mat{Q}}
\newcommand{\Rm}{\mat{R}}
\newcommand{\Sm}{\mat{S}}
\newcommand{\Tm}{\mat{T}}
\newcommand{\Xm}{\mat{X}}
\newcommand{\ZZ}{\mat{0}}
\newcommand{\Gp}{\mat{G}_{\rm pub}}
\newcommand{\Cp}{\code{C}_{\rm pub}}
\newcommand{\pk}{\mathsf{pk}}
\newcommand{\sk}{\mathsf{sk}}
\newcommand{\KG}{\mathsf{KeyGen}}
\newcommand{\Enc}{\mathsf{Encrypt}}
\newcommand{\Dec}{\mathsf{Decrypt}}
\newcommand{\gab}[2]{\CG_{#1}\left(#2\right)}
\newcommand{\code}[1]{\mathscr{#1}}
\newcommand{\dual}[1]{{#1}^\bot}
\newcommand{\CA}{\code{A}}
\newcommand{\CC}{\code{C}}
\newcommand{\CG}{\code{G}}
\newcommand{\norm}[1]{\left | #1  \right |}
\newcommand{\rank}{{\normalfont \texttt{rank}}}
\newcommand{\GL}{{\normalfont \textsf{GL}}}
\newcommand{\MS}[3]{\mathcal{M}_{#1,#2}\left(#3\right)}
\begin{document}
\title{On the Failure of the Smart Approach of the GPT Cryptosystem}
\date{}
\author{Herv\'e Tal\'e Kalachi } 
\newcommand{\Addresses}{{
  \bigskip
  \footnotesize

  Herv\'e Tal\'e Kalachi, \textsc{Department of Computer Engineering, National Advanced school of Engineering of Yaounde, University of Yaounde 1,
    Po.Box: 8069 Yaounde-Cameroon}\par\nopagebreak
  \textit{E-mail address}: \texttt{herve.tale@univ-yaounde1.cm}
}}

\maketitle

\begin{abstract}
	This paper describes a new algorithm for breaking the smart approach \citep{RGH10} of the  GPT cryptosystem \citep{GPT91}. We show that by puncturing the public code several times on specific positions, we get a public code on which  applying  the Frobenius operator appropriately allows to build an alternative secret key.

\end{abstract}

\section{Introduction}

Code-based cryptography first appeared in 1978, when \cite{M78} proposed the first public key encryption scheme not based on number theory primitives. More precisely, he built a scheme (with Goppa codes) for which the security stands on two problems, namely the hardness of the Syndrome Decoding Problem in Hamming metric \citep{BMT78} and the difficulty to distinguish a binary Goppa code from a random linear code \citep{CFS01,S02}. The public key is formed with a matrix $\Gp$ which is a product of three matrices $\Sm$, $\Gm$ and $\Pm$ where $\Gm$ is a generator matrix of the Goppa code, $\Pm$ a permutation matrix and $\Sm$ an invertible matrix.  The scheme has various advantages such as :
\begin{itemize}
\item the encryption and decryption algorithms are very efficient compared to RSA.
\item the syndrome decoding problem is known to be NP complete \citep{BMT78}, and the best attacks for solving it are exponentials in the code length. 
\item the scheme is a potential candidate for post-quantum cryptography : classic McEliece scheme and some related schemes such as ROLLO and RQC are among the second Round Candidates of the NIST competition for post-quantum cryptography. 
\end{itemize} 

However McEliece scheme came with a big disadvantage: the size of the public keys is about five hundred thousands bits for a security level of only 80 bits \citep{BLP08}. In order to solve this problem, several modifications of the scheme have been proposed among which the use of rank metric codes \citep{GPT91} instead of the Hamming metric. 

The first rank-metric scheme was proposed by \cite{GPT91} and is now called the GPT cryptosystem.
This scheme  can be seen as an analog of the McEliece public key cryptosystem based on the class of Gabidulin codes, with the difference that the new public matrix if given by $\Gp = \Sm(\Gm + \Xm)$, where $\Xm$ is a random matrix with a prescribed rank $t_{\Xm}$ (the matrix $\Xm$ is usually called the distortion matrix).

Gabidulin codes are often seen as ``analogs'' of Reed-Solomon codes \citep{N86,SS92} in the Hamming metric and like them, 
they are highly structured. That is the reason why their use in the GPT cryptosystem has been the subject of
several attacks. Gibson was the first to prove
the weakness of the system through a series of successful attacks \citep{G95,G96}.
Following this failures, the first works which modified the GPT scheme to avoid Gibson's attack were published by \cite{GO01} followed by \cite{GOHA03}.  
The idea is to hide further the structure of Gabidulin code by considering  isometries for the rank metric.
Consequently, a \emph{right column scrambler} $\Pm$ is  introduced. This matrix $\Pm$ is an invertible matrix with its entries in the base field $\fq$,
while the ambient space of the Gabidulin code is $\fqm^n$. 
But \cite{O05a,O05,O08} designed a more
general attack that dismantled all the existing  modified GPT cryptosystems. 
His approach consists in applying an operator $\Lambda_i$, which applies $i$ times the Frobenius operation 
on the public generator matrix $\Gp$. Overbeck observed that the dimension increases by $1$ each time the Frobenius is applied. He then
proved that by taking $i = n - k - 1$, the codimension becomes $1$ if $k$ is the rank of $\Gp$ (which is also the dimension of the associated Gabidulin code).
This phenomenon is a clearly distinguishing property of a Gabidulin code which cannot be encountered for instance with a random linear code where the dimension would increase by $k$ for each use of the Frobenius operator.

Overbeck's attack uses crucially  two important facts, namely the column scrambler matrix $\Pm$ is defined on the 
based field $\fq$ and the codimension of $\Lambda _{n-k-1}\left(\Gp \right)$ is equal to $1$.
Several works then proposed to resist to this attack  either by taking special 
distortion matrix so that the second property is not true as proposed by \cite{L10} and \cite{RGH10}, or 
by taking a column scrambler matrix defined over the extension field $\fqm$ as in \cite{G08,GRH09,RGH11,GP13,GP14}. It was recently shown by \cite{OTN18} that the general construction from \cite{G08} is vulnerable to a structural attack. This attack implies an attack on the variants from \cite{GRH09,RGH11,GP14,GP13}, since the construction of \cite{G08} is a generalization of the constructions given by \cite{GRH09,RGH11,GP14,GP13}.   
\cite{HMR17} also presented a cryptanalysis of the variant of \cite{RGH10}. This attack focus instead on recovering elements of rank one, rather than the dimension of the dual space.
All the above attacks are structural attacks, which unlike generic attacks (or decoding attacks) exploit the structure of the public matrix. We refer the reader to the article from \cite{BBCGPSTV20} for recent progress in term of generic attacks in rank metric.  

In this paper, we present a new structural attack of the variant described by \cite{RGH10}. Contrary to the result of \cite{HMR17}, our attack exploits the structure of the dual space and is based on a new view of the ``smart'' approach \citep{RGH10}. Concretely, we show that the reparation of \cite{RGH10} is equivalent to insert some redundancies in the public code of a standard GPT cryptosystem. We then show how to remove the redundancies in order to be able to apply Overbeck's attack on the public code obtained.


\section{Preliminary Facts}\label{Preliminary}

The finite field with $q$ elements is denoted by $\F_q$ where $q$ is a power of a prime number.
For any subfield $\K \subseteq \F$ of a field $\F$
and for any positive integers  $k$ and $n$ such that $k \le n$,
the $\K$-vector space spanned by  $\bv_1,\dots{},\bv_k$  where each $\bv_i \in \F^n$ is denoted by 
$\sum_{i=1}^k \K \; \bv_i$.
The set of matrices with $m$ rows and $n$ columns 
and entries in $\F$ is denoted by $\MS{m}{n}{\F}$. 
The group of invertible matrices of size $n$ over $\F$ is denoted by $\GL_n(\F)$.

\begin{definition}[Rank weight]
Let $\Am$ be a matrix from $\MS{m}{n}{\F}$ where $m$ and $n$ are positive integers. The \emph{rank weight} 
of $\Am$ denoted by $\norm{\Am}$ is the rank of $\Am$. The \emph{rank distance} between two matrices 
$\Am$ and $\Bm$ from $\MS{m}{n}{\F}$ is defined as $\norm{\Am - \Bm}$.
\end{definition}

It is a well-known fact that the rank distance on $\MS{m}{n}{\F}$ has the properties of a metric. 
In the context of the rank-metric cryptography, this rank distance is rather defined for vectors $\xv \in \fqm^n$ \citep{Gab85}. 
The idea is to consider the field $\fqm$ as an
$\fq$-vector space and hence any vector  $\xv \in \fqm^n$ as a matrix from $\MS{m}{n}{\fq}$ by decomposing 
each entry $x_i \in \fqm$ into an $m$-tuple of $\fq^m$ with respect to an arbitrary basis of $\fqm$.  The rank weight 
of $\xv$ also denoted by $\norm{\xv}$ is then its rank\footnote{This rank is of course independent of the choice of the basis  of $\fqm$ since the rank of a matrix is invariant when multiplied by an invertible matrix.} 
viewed as a matrix of $\MS{m}{n}{\fq}$. Hence, it is possible to define a new metric on $\fqm^n$ that we recall explicitly in the following.

\begin{definition}
Let us consider the finite field extension $\fqm/\fq$ of degree $m \ge 1$.
The \emph{rank weight} of a vector $\xv = \left(x_{1},x_{2},...,x_{n}\right)$ in $\fqm^n$  denoted by $\norm \xv$
is  the dimension of the $\fq$-vector space generated by $\{x_1,\dots{},x_n\}$
\begin{equation}
\norm \xv = \dim \sum_{i=1}^n \fq x_i.
\end{equation}
Similarly,  the  column rank over $\fq$ of a matrix $\Mm$ from $\MS{k}{n}{\fqm}$ is denoted by $\norm \Mm$, and is defined to be the dimension of $\sum_{i}^n \fq \Mm_{i}$ where $\Mm_1,\dots{},\Mm_n$ 
are the columns of $\Mm$.   
\end{definition}
\begin{proposition} \label{prop:rank_reduction}
Let $\Mm$ be a matrix from $\MS{k}{n}{\fqm}$ and set  $s = \norm{\Mm}$ with $s  < n$. 
There exist then $\Mm^*$ in $\MS{k}{s}{\fqm}$ with $\norm{\Mm^*} = s$ and
$\Tm$ in $\GL_n(\fq)$ such that:
\begin{equation}
\Mm \Tm= (\Mm^* \mid \ZZ)
\end{equation}
In particular for any $\xv \in \fqm^{n}$ such that  $\norm{\xv} = s$ 
there exists  $\Tm$ in $\GL_n(\fq)$  for which $\xv \Tm=(\xv^* \mid \ZZ)$ where 
$\xv^* \in \fqm^{s}$ and $\norm{\xv^*}=s$.
\end{proposition}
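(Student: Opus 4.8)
The plan is to reduce everything to the statement about a single vector and then build the matrix version by concatenation of bases. First I would prove the "in particular" clause. Given $\xv \in \fqm^n$ with $\norm{\xv}=s$, the coordinates $x_1,\dots,x_n$ span an $s$-dimensional $\fq$-subspace of $\fqm$. Pick an $\fq$-basis $b_1,\dots,b_s$ of that subspace; then each $x_j = \sum_{i=1}^s a_{ij} b_i$ with $a_{ij}\in\fq$, so writing $A=(a_{ij})\in\MS{s}{n}{\fq}$ we have $\xv = (b_1,\dots,b_s)\,A$ with $\rank(A)=s$ (the rank of $A$ over $\fq$ is exactly $\norm{\xv}=s$). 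Since $A$ has full row rank $s<n$, column operations over $\fq$ put it in the form $(\,\Im_s \mid \ZZ\,)$ up to a permutation, i.e.\ there is $\Tm\in\GL_n(\fq)$ with $A\Tm = (\Im_s \mid \ZZ)$; hence $\xv\Tm = (b_1,\dots,b_s \mid \ZZ) = (\xv^* \mid \ZZ)$ with $\xv^* = (b_1,\dots,b_s)$, which manifestly has rank weight $s$.

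For the matrix statement I would run essentially the same argument one dimension up. Let $\Mm_1,\dots,\Mm_n \in \fqm^k$ be the columns of $\Mm$, spanning an $s$-dimensional $\fq$-subspace $V$ of $\fqm^k$ (here $s=\norm{\Mm}$ by definition of the column rank over $\fq$). Choose an $\fq$-basis $\Cm_1,\dots,\Cm_s$ of $V$, assemble it into $\Mm^* \in \MS{k}{s}{\fqm}$ whose columns are the $\Cm_i$; by construction $\norm{\Mm^*}=s$. Each column $\Mm_j$ is an $\fq$-linear combination of the $\Cm_i$, so $\Mm = \Mm^* A$ for some $A\in\MS{s}{n}{\fq}$, and since the $\Mm_j$ span all of $V$ the matrix $A$ has rank $s$ over $\fq$. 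As before, $A$ having full row rank $s<n$ means there is $\Tm\in\GL_n(\fq)$ with $A\Tm=(\Im_s\mid\ZZ)$, whence $\Mm\Tm = \Mm^*(\Im_s\mid\ZZ) = (\Mm^*\mid\ZZ)$, as required.

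I do not expect any serious obstacle here; the statement is a clean linear-algebra fact. The only point deserving a line of care is the identification $\rank_{\fq}(A) = \norm{\Mm}$: one must check that the $\fq$-rank of the coordinate matrix $A$ equals the dimension of the column span $V$, which holds precisely because $\Mm^*$ has independent columns over $\fqm$ — multiplying the full-rank matrix $\Mm^*$ on the right does not change the $\fq$-dimension of the column space, and $\Mm = \Mm^* A$ shows that space is the image of $A$ acting on $\fq^n$. A second, even more minor point is the reduction of a full-row-rank matrix over $\fq$ to $(\Im_s\mid\ZZ)$ by invertible column operations, which is just Gaussian elimination on columns. Everything else is bookkeeping, and the vector case is literally the matrix case with $k$ playing no role, so I would state the matrix case and remark that the vector case follows by taking $k=1$ (or simply by the same proof).
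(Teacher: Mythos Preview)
Your argument is correct and is the standard way to establish this elementary linear-algebra fact. The paper itself does not give a proof of this proposition at all; it is stated as a known result and used immediately to derive Corollary~\ref{cor:uprank}. So there is nothing to compare against: your write-up would simply fill in a proof the authors chose to omit.

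One small wording issue worth cleaning up: when you justify $\rank_{\fq}(A)=s$, you write that ``$\Mm^*$ has independent columns over $\fqm$''. What you actually use (and what is true by construction) is that the columns of $\Mm^*$ are $\fq$-linearly independent, since they were chosen as an $\fq$-basis of the column span $V\subset\fqm^k$. Independence over $\fqm$ is neither guaranteed (it can fail whenever $s>k$) nor needed. With that correction the argument is clean: the map $\fq^s\to V$, $v\mapsto \Mm^* v$, is an $\fq$-isomorphism, so the $\fq$-span of the columns of $\Mm=\Mm^*A$ corresponds to the $\fq$-column space of $A$, giving $\rank_{\fq}(A)=\norm{\Mm}=s$.
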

This permits to state the following corollary.

\begin{corollary} \label{cor:uprank}
For any $\Mm \in \MS{k}{n}{\fqm}$ and for any  $\mv \in \fqm^k$
\begin{equation}
\norm{\mv\Mm} \leq \norm{\Mm}
\end{equation}
\end{corollary}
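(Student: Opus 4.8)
The plan is to reduce to the full-column-rank case via Proposition~\ref{prop:rank_reduction} and then invoke the trivial bound that a vector of $\fqm^{s}$ has rank weight at most $s$.

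First I would set $s = \norm{\Mm}$. If $s = n$, the inequality is immediate: $\mv\Mm$ lies in $\fqm^{n}$, so $\norm{\mv\Mm} = \dim\sum_{i=1}^{n}\fq(\mv\Mm)_i \le n = s$. Hence assume $s < n$, so that Proposition~\ref{prop:rank_reduction} applies and yields $\Mm^{*} \in \MS{k}{s}{\fqm}$ and $\Tm \in \GL_n(\fq)$ with $\Mm\Tm = (\Mm^{*} \mid \ZZ)$. Multiplying on the left by $\mv$ gives $\mv\Mm\Tm = (\mv\Mm^{*} \mid \ZZ)$ with $\mv\Mm^{*} \in \fqm^{s}$.

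Next I would observe that right multiplication by an element of $\GL_n(\fq)$ does not change the column rank over $\fq$: the columns of $\mv\Mm\Tm$ are $\fq$-linear combinations of the columns of $\mv\Mm$, and conversely the columns of $\mv\Mm = (\mv\Mm\Tm)\Tm^{-1}$ are $\fq$-linear combinations of the columns of $\mv\Mm\Tm$ since $\Tm^{-1} \in \GL_n(\fq)$; hence the two $\fq$-spans coincide. Therefore $\norm{\mv\Mm} = \norm{\mv\Mm\Tm} = \norm{(\mv\Mm^{*}\mid\ZZ)} = \norm{\mv\Mm^{*}}$, the last step because adjoining zero columns leaves the $\fq$-span of the columns unchanged.

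Finally, since $\mv\Mm^{*}$ is a vector of $\fqm^{s}$, its rank weight is the dimension of the $\fq$-space spanned by its $s$ coordinates, which is at most $s$. Combining these gives $\norm{\mv\Mm} = \norm{\mv\Mm^{*}} \le s = \norm{\Mm}$, as claimed. The argument is essentially routine; the only points deserving attention are the invariance of the $\fq$-column rank under right multiplication by $\GL_n(\fq)$ and the separate handling of the boundary case $s = n$, which falls outside the hypothesis of Proposition~\ref{prop:rank_reduction}.
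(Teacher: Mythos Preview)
Your proof is correct and follows essentially the same route as the paper: reduce via Proposition~\ref{prop:rank_reduction} to $(\Mm^{*}\mid\ZZ)$, use invariance of the $\fq$-column rank under $\GL_n(\fq)$, and bound by $s$. You are in fact slightly more careful than the paper, which silently applies Proposition~\ref{prop:rank_reduction} without isolating the boundary case $s=n$.
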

\begin{proof}
Suppose that $\norm \Mm = s$ and let $\Tm$ in $\GL_n(\fq)$ such that $\Mm \Tm= (\Mm^* \mid \ZZ)$ with $\Mm^*$ in $\MS{k}{s}{\fqm}$ . We then have $\norm{\mv\Mm} = \norm{\mv\Mm \Tm}= \norm{\mv(\Mm^* \mid \ZZ)} \leq \norm{\Mm^*} \leq s$. 
\end{proof}

\begin{notation}
For any $x$ in $\fqm$ and for any integer $i$, the quantity $x^{q^{i}}$ is denoted by $x^{[i]}$. This notation is extended to vectors  $\xv^{[i]} = (x_{1}^{[i]},\dots{},x_{n}^{[i]})$ and matrices $\Mm^{[i]}= \left ( m_{ij}^{[i]} \right)$.
\end{notation}

We also give a lemma that will be useful in the sequel.

\begin{lemma} \label{lem:frob_prop}
For any $\Am \in \MS{\ell}{s}{\fqm}$ and  $\Bm \in \MS{k}{n}{\fqm}$, and for any $\alpha$ and $\beta$ in $\fq$:
\begin{enumerate}
\item If $\ell = k$ and $s= n$ then
\[
\left( \alpha \Am + \beta \Bm \right)^{\left[i\right]} = \alpha \Am^{\left[i\right]} + \beta \Bm^{\left[i\right]}
\]
\item If $s=k$ then
\[
\left(\Am \Bm \right)^{\left[i\right]}=\Am^{\left[i\right]} \Bm^{\left[i\right]}.
\]
In particular if $\Sm$ is in $\GL_n(\fqm)$ then $\Sm^{\left[i\right]}$ also belongs to $\GL_n(\fqm)$ and 
\[
(\Sm^{[i]})^{-1} = \left(\Sm^{-1}\right)^{[i]}
\] 
\end{enumerate}
\end{lemma}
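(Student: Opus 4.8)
The plan is to reduce both assertions to the single fact that the map $x \mapsto x^{[i]} = x^{q^i}$ is a field automorphism of $\fqm$ that fixes $\fq$ pointwise. Indeed, $q^i$-th powering is the $i$-th iterate of the Frobenius automorphism $x \mapsto x^q$, hence it is additive and multiplicative; and for any $\alpha \in \fq$ we have $\alpha^q = \alpha$ (every element of $\fq$ is a root of $X^q - X$), so by induction $\alpha^{[i]} = \alpha$. I would first record these scalar identities: $(a+b)^{[i]} = a^{[i]} + b^{[i]}$ and $(ab)^{[i]} = a^{[i]} b^{[i]}$ for all $a,b \in \fqm$, together with $\alpha^{[i]} = \alpha$ for $\alpha \in \fq$.

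For the first claim, assuming $\ell = k$ and $s = n$ so that the combination $\alpha\Am + \beta\Bm$ is defined, I would compare the two sides entrywise: the $(j,\ell)$ entry of $(\alpha\Am + \beta\Bm)^{[i]}$ is $(\alpha A_{j\ell} + \beta B_{j\ell})^{[i]}$, which expands via the scalar identities (using $\alpha^{[i]} = \alpha$, $\beta^{[i]} = \beta$) to $\alpha A_{j\ell}^{[i]} + \beta B_{j\ell}^{[i]}$, precisely the $(j,\ell)$ entry of $\alpha\Am^{[i]} + \beta\Bm^{[i]}$.

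For the second claim, assuming $s = k$ so that the product $\Am\Bm$ is defined, I would again compare entries: the $(j,\ell)$ entry of $(\Am\Bm)^{[i]}$ is $\bigl(\sum_{r} A_{jr} B_{r\ell}\bigr)^{[i]}$, and iterating the additive and multiplicative scalar identities turns this into $\sum_{r} A_{jr}^{[i]} B_{r\ell}^{[i]}$, which is the $(j,\ell)$ entry of $\Am^{[i]}\Bm^{[i]}$. The invertibility statement then follows immediately: applying $[i]$ to the relation $\Sm\Sm^{-1} = \Im$ and using the multiplicativity just established gives $\Sm^{[i]}(\Sm^{-1})^{[i]} = \Im^{[i]} = \Im$, since the identity matrix has entries in $\fq$; likewise $(\Sm^{-1})^{[i]}\Sm^{[i]} = \Im$, so $\Sm^{[i]} \in \GL_n(\fqm)$ with $(\Sm^{[i]})^{-1} = (\Sm^{-1})^{[i]}$.

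There is essentially no hard step here; the only points requiring care are bookkeeping — checking that the matrix shapes make the operations in each item well defined, which is exactly why the hypotheses $\ell = k$, $s = n$ (item 1) and $s = k$ (item 2) are imposed — and not overlooking that the scalar Frobenius fixes $\fq$, which is precisely what lets the $\fq$-coefficients $\alpha$, $\beta$ pass through unchanged and distinguishes this computation from a generic ring-homomorphism argument.
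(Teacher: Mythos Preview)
Your proof is correct and follows exactly the same approach as the paper: both reduce the matrix identities to the fact that the Frobenius map $x\mapsto x^{q^i}$ is a field automorphism of $\fqm$ fixing $\fq$ (hence $\fq$-linear and multiplicative), and both deduce the invertibility statement by applying $[i]$ to the relation $\Sm\Sm^{-1}=\Im_n$. Your version simply spells out the entrywise computation that the paper leaves implicit.
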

\begin{proof}
The proof of the two points comes directly from the properties of the Frobenius operators (multiplicative and $\fq-$linear). To finish, remark that for $\Sm$ in $\GL_n(\fqm)$, since $\Sm \Sm^{-1}= \Im_n$ we also have $\Sm^{[i]} \left(\Sm^{-1}\right)^{[i]}= \Im_n$. This implies that $\Sm^{[i]}$ belongs to $\GL_n(\fqm)$ and $(\Sm^{[i]})^{-1} = \left(\Sm^{-1}\right)^{[i]}$  
\end{proof}

Let us recall that a (linear) code of length $n$ over a finite field $\F$ is a linear subspace of $\F^n$. Elements of a code are called \emph{codeword}. A matrix whose rows form a basis of a code is called a \emph{generator matrix}.
The dual of a code $\CC\subset \F^n$ is the linear space denoted by $\dual{\CC}$ containing vectors $\zv \in \F^n$ such that:
\[
\forall \cv \in \CC, \;\; \sum_{i=1}^n c_i z_i = 0. 
\]
An algorithm $D$ is said to 
decode $t$ errors in a code $\CC \subset \F^n$ if for any $\cv \in \CC$ and for any $\ev \in \F^n$ such that $\norm \ev \le t$ we have $D(\cv + \ev) = \cv$. 
Generally, we call such a vector $\ev$ an \emph{error} vector.
We introduce now an important family of codes known for having an efficient decoding algorithm.
\subsection{Gabidulin Codes}
\begin{definition}\citep{Gab85}
Let $\gv \in \fqm^{n}$ such that $\norm{\gv}=n$.
The $(n,k)-$Gabidulin code $\gab{k}{\gv}$ is the code of length $n$ and dimension $k$ generated by the matrix 
\begin{equation} \label{gab:genmat}
\Gm=
\begin{pmatrix}
g_{1}^{[0]} & \cdots{} & g_{n}^{[0]} \\
\vdots{}      &              &  \vdots{} \\
g_{1}^{[k-1]} & \cdots{} & g_{n}^{[k-1]}
\end{pmatrix}.
\end{equation}
\end{definition}
Gabidulin codes are known to possess a fast decoding algorithm that 
can decode errors of weight $t$ provided that $t \leq \lfloor \frac{1}{2}(n-k) \rfloor$.
Furthermore the dual of a Gabidulin code $\gab{k}{\gv}$ is also a Gabidulin code. We refer the reader to the papers from \cite{Gab85,L07,SK09,W13} for more details concerning Gabidulin codes and their decoding algorithms.

In the sequel, a matrix that has the structure of \eqref{gab:genmat} will be called a $q-$Vandermonde matrix.  The following proposition gives an important well-known property about Gabidulin codes.

\begin{proposition}\label{prop:ScramblingAGabidulinAtRight}
Let $\gab{k}{\gv}$ be a Gabidulin code of length $n$ with generator
matrix $\Gm$ and $\Tm \in \GL_n(\fq)$. 
Then $\Gm\Tm$ is a generator matrix of the Gabidulin code $\gab{k}{\gv \Tm}.$
\end{proposition}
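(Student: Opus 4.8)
The plan is to verify that $\Gm\Tm$ is nothing but the $q$-Vandermonde matrix of \eqref{gab:genmat} associated with the vector $\gv\Tm$, and that this vector still has full rank $n$, so that the Gabidulin code $\gab{k}{\gv\Tm}$ is well defined and $\Gm\Tm$ is one of its generator matrices.

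First I would check that $\norm{\gv\Tm}=n$. Because $\Tm\in\GL_n(\fq)$ has all its entries in the base field, every coordinate of $\gv\Tm$ is an $\fq$-linear combination of the coordinates of $\gv$; and since $\Tm^{-1}$ again has its entries in $\fq$, every coordinate of $\gv$ is likewise an $\fq$-linear combination of the coordinates of $\gv\Tm$. The two families therefore span the same $\fq$-subspace of $\fqm$, whence $\norm{\gv\Tm}=\norm{\gv}=n$ and $\gab{k}{\gv\Tm}$ is defined.

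The core step is the identity $(\gv\Tm)^{[i]}=\gv^{[i]}\Tm$, valid for every $i$: by the second point of Lemma \ref{lem:frob_prop} one has $(\gv\Tm)^{[i]}=\gv^{[i]}\Tm^{[i]}$, and $\Tm^{[i]}=\Tm$ since the entries of $\Tm$ lie in $\fq$ and are thus fixed by the Frobenius $x\mapsto x^{[1]}$. The rows of $\Gm$ being $\gv^{[0]},\dots,\gv^{[k-1]}$, the rows of $\Gm\Tm$ are $\gv^{[0]}\Tm,\dots,\gv^{[k-1]}\Tm$, that is $(\gv\Tm)^{[0]},\dots,(\gv\Tm)^{[k-1]}$; hence $\Gm\Tm$ is exactly the matrix \eqref{gab:genmat} built from $\gv\Tm$.

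Finally, to see that $\Gm\Tm$ generates $\gab{k}{\gv\Tm}$ rather than merely a subcode, I would note that $\Gm$ has rank $k$ (it generates a $k$-dimensional code) and $\Tm$ is invertible, so $\Gm\Tm$ has rank $k$ as well; its $k$ rows are then linearly independent and span the whole code $\gab{k}{\gv\Tm}$. I do not anticipate any genuine difficulty here; the one point deserving attention is the commutation $(\gv\Tm)^{[i]}=\gv^{[i]}\Tm$, and it is precisely there that the hypothesis $\Tm\in\GL_n(\fq)$ — and not $\GL_n(\fqm)$ — is essential, which is exactly why right column scramblers in the GPT-type constructions are taken over the base field.
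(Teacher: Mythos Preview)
Your proof is correct and follows exactly the same route as the paper: the key identity $(\gv\Tm)^{[i]}=\gv^{[i]}\Tm$, obtained from Lemma~\ref{lem:frob_prop} together with $\Tm^{[i]}=\Tm$ since $\Tm$ has entries in $\fq$. You simply spell out more details (that $\norm{\gv\Tm}=n$ and that $\Gm\Tm$ still has rank $k$) than the paper's one-line proof, which is a welcome addition.
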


\begin{proof}

From Lemma~\ref{lem:frob_prop}, we have $\left(\gv \Tm \right)^{\left[ i \right]}=\gv^{{\left[ i \right]}}\Tm$ for any integer $i$. 
\end{proof}
In the following section, we briefly describe the general GPT cryptosystem \citep{GPT91}.
\subsection{The general GPT cryptosystem}
The key generation algorithm of the general GPT cryptosystem takes as input the integers $k$, $\ell$, $n$ and $m$ such that $k<n \leq m$ and $\ell \ll n$ and outputs the public key/private key pair $(\pk,\sk).$

\paragraph{$\KG(n,m,k,\ell,q)=(\pk,\sk)$}
\begin{enumerate}
\item Let $\Gm \in \MS{k}{n}{\fqm}$ be a generator matrix of the Gabidulin code $\gab{k}{\gv}$

\item Pick $\Sm \in \GL_k(\fqm)$, $\Xm \in \MS{k}{\ell}{\fqm}$ and $\Pm \in \GL_{n+\ell}(\fq)$. 

\item Compute $\Gp \eqdef \Sm (\Xm \mid \Gm) \Pm$ and $t=\frac{n-k}{2}$

\item Return 
\[
\pk=(\Gp,t) \text{~and~} \sk=(\Sm, \Pm ).
\] 

\end{enumerate}

To encrypt a message $\mv \in \fqm^{k}$, apply the following function:

\paragraph{$\Enc(\mv,\pk)=\zv$}
\begin{enumerate}
\item Generate a random error-vector $\ev \in \fqm^n$ with $ \norm{ \ev}_q  \leq t$
\item Return $\zv=\mv {\Gp} \oplus \ev$
\end{enumerate}

The decryption function $\Dec( )$ takes as input a ciphertext $\zv$ and the private key $\sk$ and outputs the corresponding message $\mv$.
\paragraph{$\Dec(\zv,\sk) = \mv^\prime$}
\begin{enumerate} 
\item First compute $\zv \Pm^{-1}=\mv \Sm \left(\Xm \mid \Gm \right)+\ev \Pm^{-1}$. 
\item Extract the last $n$ components $\zv^\prime$ of $\zv \Pm^{-1}$. 
\item Apply a fast decoding algorithm of $\gab{k}{\gv}$ to $\zv^\prime$ to obtain $\mv^*$
\item Return $\mv^* \Sm^{-1}$
\end{enumerate}
Remark that $\zv^\prime$ will satisfy  $\zv^\prime =\mv \Sm\Gm +\ev^\prime$ where $\ev^\prime$ is a sub-vector of $\ev \Pm^{-1}$, hence $\norm{\ev^\prime}_q \leq t$. The output $\mv^*$ of the decoding algorithm for $\gab{k}{\gv}$ then satisfies $\mv^* = \mv \Sm$. So we have $\mv^\prime = \mv$.  
\cite{O08} proposed a very efficient attack on the general GPT cryptosystem. The attack exploits the distinguishing properties of Gabidulin codes.

\color{black}
\subsection{Distinguishing Properties of Gabidulin Codes}

We recall important algebraic properties about Gabidulin codes. One key property is that Gabidulin codes can be easily distinguished from random linear codes.

\begin{definition}
For any integer $i \ge 0$ let
$\Lambda _{i} :  \MS{k}{n}{\fqm}  \longrightarrow \MS{(i+1) k }{n}{\fqm}$ be the $\fq$-linear operator that
maps any $\Mm$ from $\MS{k}{n}{\fqm}$ to $\Lambda_i(\Mm)$ where by definition:
\begin{equation}
\Lambda_i(\Mm) \eqdef 
\begin{pmatrix}
\Mm^{[0]} \\
\vdots{} \\
\Mm^{[i]}
\end{pmatrix}.
\end{equation}
For any code $\code{G}$ generated by a matrix $\Gm$ we denote by $\Lambda_i(\code{G})$ the code generated by  $\Lambda_i(\Gm)$.
\end{definition}

The importance of $\Lambda _{i}$ becomes clear when one compares the dimension
of the code spanned by $\Lambda _{i}(\Gm)$ for a randomly drawn matrix $\Gm$ and  the dimension obtained when
$\Gm$ generates a Gabidulin code.

\begin{proposition}\citep{O08} \label{prop:dsg_gab}
Let $\gv$  be in $\fqm^n$ with $\norm{\gv} = n$ with $n \leq m$. For any integers $k$ and $i$ such that $k \leq n$ and
$i\leq n-k-1$ we have:
\begin{equation}
\Lambda _{i}\big(\gab{k}{\gv}\big) = \gab{k+i}{\gv}.
\end{equation}
\end{proposition}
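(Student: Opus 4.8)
The plan is to unwind the definition of $\Lambda_i$ together with the explicit shape \eqref{gab:genmat} of a $q$-Vandermonde generator matrix, and to check that $\Lambda_i\big(\gab{k}{\gv}\big)$ and $\gab{k+i}{\gv}$ are generated by matrices with exactly the same set of rows, namely $\gv^{[0]},\gv^{[1]},\dots,\gv^{[k+i-1]}$.

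Concretely, I would start from the generator matrix $\Gm$ of $\gab{k}{\gv}$ whose $\ell$-th row is $\gv^{[\ell]}$ for $0\le \ell\le k-1$. Since $\big(\gv^{[\ell]}\big)^{[j]}=\gv^{[\ell+j]}$, the block $\Gm^{[j]}$ has rows $\gv^{[j]},\gv^{[j+1]},\dots,\gv^{[j+k-1]}$; here I would invoke Lemma~\ref{lem:frob_prop} to be sure that taking the $j$-th Frobenius power commutes with the row structure and, incidentally, that the resulting code does not depend on the chosen generator matrix (changing $\Gm$ into $\Am\Gm$ with $\Am\in\GL_k(\fqm)$ only multiplies $\Lambda_i(\Gm)$ on the left by the invertible block-diagonal matrix with blocks $\Am^{[0]},\dots,\Am^{[i]}$). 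Stacking the blocks $\Gm^{[0]},\dots,\Gm^{[i]}$ then shows that the set of rows of $\Lambda_i(\Gm)$ is precisely $\{\gv^{[s]} : 0\le s\le k+i-1\}$, which is exactly the set of rows of the $q$-Vandermonde matrix defining $\gab{k+i}{\gv}$. Two matrices with the same set of rows generate the same code, so $\Lambda_i\big(\gab{k}{\gv}\big)=\gab{k+i}{\gv}$.

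The hypotheses enter only to guarantee that the right-hand side is a genuine $(n,k+i)$-Gabidulin code: from $i\le n-k-1$ we get $k+i\le n-1< n\le m$, so the $k+i$ rows $\gv^{[0]},\dots,\gv^{[k+i-1]}$ are $\fqm$-linearly independent (the standard full-rank property of Moore matrices built from a vector of rank $n$), and the code indeed has dimension $k+i$. I do not anticipate a real obstacle here: the statement reduces to the bookkeeping identity above once $\Gm^{[j]}$ is written out; the only points needing a little care are the correct use of the $\fq$-linearity and multiplicativity of the Frobenius when passing from $\Gm$ to $\Gm^{[j]}$, and the remark on independence of the chosen generator matrix.
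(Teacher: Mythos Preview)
Your argument is correct and is exactly the standard one: writing out the rows of $\Gm^{[j]}$ shows that $\Lambda_i(\Gm)$ and the $q$-Vandermonde matrix of $\gab{k+i}{\gv}$ have the same row set, and the hypothesis $i\le n-k-1$ ensures $k+i\le n$ so that the latter is a genuine Gabidulin code. Note, however, that the paper does not supply its own proof of this proposition; it simply quotes the result from \cite{O08}, so there is nothing in the paper to compare your approach against.
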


\begin{proposition}\citep{O08}
If $\CA \subset \fqm^n$ is a code of dimension $k$ generated by a random matrix from $\MS{k}{n}{\fqm}$ then, with  probability greater than $1 - 4 q^{-m}$ we have:
\begin{equation}
\dim \Lambda _{i}(\CA) = 
\min\big \{ n,(i+1)k\big\}
\end{equation}
\end{proposition}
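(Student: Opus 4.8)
The plan is to get the easy inequality for free and the hard one by dualising. Write $\Gm$ for the random matrix generating $\CA$. The event that $\Gm$ fails to have full rank $k$ has probability at most $2q^{-m}$, and on its complement $\CA$, hence also $\dual{\CA}$, is a uniformly random subspace (of dimension $k$, resp. $n-k$); so we may condition on this and only need a conditional failure probability of order $q^{-m}$. Since $\Lambda_i(\Gm)$ has exactly $(i+1)k$ rows and lies in $\fqm^n$, automatically $\dim\Lambda_i(\CA)\le\min\{n,(i+1)k\}$, so the statement reduces to the reverse inequality holding with high probability, equivalently $\dim\dual{\Lambda_i(\CA)}\le\max\{0,\,n-(i+1)k\}=:D$ with high probability.

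First I would record that $(\dual{\mathcal B})^{[\ell]}=\dual{(\mathcal B^{[\ell]})}$ for any code $\mathcal B$ — one line from $\langle\cv,\zv\rangle^{[\ell]}=\langle\cv^{[\ell]},\zv^{[\ell]}\rangle$ and $x\mapsto x^{[\ell]}$ being a field automorphism — and combine it with the block shape of $\Lambda_i$ to get $\dual{\Lambda_i(\CA)}=\bigcap_{\ell=0}^{i}(\dual{\CA})^{[\ell]}$. Putting $\mathcal B:=\dual{\CA}$, I must bound $\Pr\big[\dim\bigcap_{\ell=0}^{i}\mathcal B^{[\ell]}>D\big]$, and the natural tool is a union bound over $(D+1)$-dimensional subspaces $W\subseteq\fqm^n$: if the intersection exceeds dimension $D$ it contains such a $W$, and $W\subseteq\mathcal B^{[\ell]}$ for all $\ell$ is equivalent to $\sum_{\ell=0}^{i}W^{[-\ell]}\subseteq\mathcal B$. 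For a fixed $W$ with $e:=\dim\sum_{\ell}W^{[-\ell]}$, the probability that a uniformly random $(n-k)$-dimensional $\mathcal B$ contains this fixed $e$-dimensional space is $O(q^{-mke})$, so the union bound becomes $\sum_{e}\#\{W:\dim\sum_\ell W^{[-\ell]}=e\}\cdot O(q^{-mke})$, to be estimated stratum by stratum.

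The main obstacle is exactly this stratified count, which is also where $i\le n-k-1$ and $n\le m$ are used and where care is needed to recover a clean absolute constant. A crude union bound over the $q^{m(i+1)k}$ candidate left-kernel vectors of $\Lambda_i(\Gm)$ is hopelessly lossy; the point of going through subspaces is that the worst stratum $e=D+1$ consists exactly of the Frobenius-stable $W$ (those with an $\fq$-rational basis), of which there are only $\approx q^{(D+1)(n-D-1)}$ — no factor $m$ in the exponent — so their contribution $\approx q^{(D+1)(n-D-1)}q^{-mk(D+1)}$ is negligible because the hypotheses force $n-D-1<mk$; for the remaining $W$ the value $e$ is larger, and the larger negative power of $q^{mk}$ more than offsets their larger count. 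A second, more computational route avoids the dual: because $y\mapsto y^{q^{\ell}}$ is polynomial on $\fqm$, every maximal minor of $\Lambda_i(\Gm)$ is a polynomial in the entries of $\Gm$ in which each variable occurs to degree at most $q^{i}<q^{m}$, so it is enough to exhibit one matrix $\Gm_0$ over $\fqm$ with $\rank\Lambda_i(\Gm_0)=\min\{n,(i+1)k\}$ — which cannot in general be a Gabidulin generator, since for those Proposition~\ref{prop:dsg_gab} gives only $\dim\Lambda_i=k+i$; when $(i+1)k\le n$ one can take the block-diagonal matrix whose $j$-th row carries an $\fq$-linearly independent $(i+1)$-tuple supported on a block of $i+1$ coordinates disjoint from the others, with a padding variant otherwise — and then invoke Schwartz--Zippel, the obstacle there being again to control the individual degrees tightly enough.
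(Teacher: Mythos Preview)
The paper does not give its own proof of this proposition: it is quoted verbatim from \cite{O08} and used as a black box, with no proof block following it. There is therefore nothing in the paper to compare your attempt against.

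On the attempt itself, what you have written is an outline with the hard step explicitly left open, not a proof. The dualisation $\dual{\Lambda_i(\CA)}=\bigcap_{\ell=0}^{i}(\dual{\CA})^{[\ell]}$ is correct, and the idea of a union bound over $(D+1)$--dimensional $W$ stratified by $e=\dim\sum_\ell W^{[-\ell]}$ is the right shape; you also correctly identify the bottom stratum $e=D+1$ as exactly the $\fq$--rational subspaces and bound its contribution. But you stop there: the strata with $e>D+1$ are not counted, and nothing you wrote pins down the constant $4$ (you say as much yourself). Note also that the statement as printed carries no hypothesis $i\le n-k-1$ or $n\le m$; you invoke both, so at best you are proving a restricted version.

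Your alternative Schwartz--Zippel route has a quantitative problem you underplay. The determinant of a maximal minor of $\Lambda_i(\Gm)$ has total degree $k(1+q+\cdots+q^{i})$ in the entries of $\Gm$ (each term of the expansion picks one entry per Frobenius block, contributing degrees $1,q,\dots,q^{i}$, and there are $k$ rows per block), so Schwartz--Zippel over $\fqm$ yields only a failure probability of order $k\,q^{i+1-m}$, not $4q^{-m}$. Exhibiting a witness $\Gm_0$ is fine --- your block-diagonal construction with $\fq$--independent $(i{+}1)$--tuples works via the nonvanishing of the Moore determinant --- but the degree bound is the bottleneck, and ``control the individual degrees tightly enough'' is not a fix: the total degree genuinely is that large.
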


In the case of a Gabidulin code, we get a different  situation as explained by Proposition~\ref{prop:dsg_gab}, namely $\dim \Lambda _{i}\big(\gab{k}{\gv}\big) = \min\big \{ n,k + i \big\} $.
Therefore, we have a property that is verifiable in polynomial time, and allowing to distinguish a Gabidulin code from a random code. This can be used in a cryptanalysis context. In fact, \cite{O08} 
has proven that, for a public matrix $\Gp$ given by $\Gp = \Sm \left(\Xm \mid \Gm \right) \Pm$
with $\Xm \in \MS{k}{\ell}{\fqm}$, $\Pm \in \GL_{n+\ell}(\fq)$ and $\Gm$ generating a Gabdidulin code $\gab{k}{\gv}$,
(in particular all the entries of $\Pm$ belong to $\fq$), 
it is possible (under certain conditions) to find in polynomial time an alternative decomposition of 
$\Gm_p$ of the from $\Sm^* \left( \Xm^* \mid \Gm^* \right) \Pm^* $ using the operator $\Lambda_i$. 
This decomposition allows to decrypt any ciphertext computed with $\Gm_p$. 
The reader can refer to \citep{O08} for details concerning the attack.  
The key reason explaining its success is given by the following proposition.

\begin{proposition}
Let us consider $\ell$, $k$ and $n$ be positive integers with $\ell < n$ and $1 \leq k < n$. Let
$\Gm$ be in $\MS{k}{n}{\fqm}$ as a generator matrix of a Gabidulin code, and $\Xm$ 
be a randomly drawn matrix from $\MS{k}{\ell}{\fqm}$. Denote $\CA$ as the
code defined by the generator matrix $\left( \Xm \mid \Gm\right) $. 
Then for any integer $i \ge 0$
\begin{equation} \label{ineq:dsg}
k+i ~ \leq ~ \dim \Lambda _{i}\left(\CA \right) ~ \leq ~ k + i + d
\end{equation}
where $d = \min \Big \{ (i+1)k, \ell \Big \}$.
\end{proposition}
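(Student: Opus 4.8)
Before turning to the author's argument, here is how I would approach this. The starting observation is that the Frobenius map $x\mapsto x^{[j]}$ acts entrywise and fixes $\fq$, so $(\Xm\mid\Gm)^{[j]}=(\Xm^{[j]}\mid\Gm^{[j]})$ for every $j$; stacking these for $j=0,\dots,i$ gives
\[
\Lambda_i\big((\Xm\mid\Gm)\big)=\big(\Lambda_i(\Xm)\mid\Lambda_i(\Gm)\big),
\]
an $(i+1)k\times(n+\ell)$ matrix whose rank is, by definition, $\dim\Lambda_i(\CA)$. Two facts feed both halves of the inequality: by Proposition~\ref{prop:dsg_gab} the last $n$ columns $\Lambda_i(\Gm)$ generate the Gabidulin code $\gab{k+i}{\gv}$, hence have rank exactly $k+i$ (I work in the range $i\le n-k-1$ where $\gab{k+i}{\gv}$ is defined; see the last paragraph); and $\Lambda_i(\Xm)$ has $(i+1)k$ rows and $\ell$ columns, so its rank is at most $d=\min\{(i+1)k,\ell\}$.

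For the lower bound I would project $\fqm^{n+\ell}$ onto its last $n$ coordinates. This $\fqm$-linear map restricts to a surjection of $\Lambda_i(\CA)$ onto the code generated by $\Lambda_i(\Gm)$, that is, onto $\Lambda_i\big(\gab{k}{\gv}\big)=\gab{k+i}{\gv}$. A linear image has dimension no larger than its domain, so
\[
\dim\Lambda_i(\CA)\ \ge\ \dim\gab{k+i}{\gv}\ =\ k+i.
\]
Note that nothing about $\Xm$ is used here.

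For the upper bound I would first record the elementary inequality $\rank(\Am\mid\Bm)\le\rank\Am+\rank\Bm$ for matrices with the same number of rows: the image of the row space of $(\Am\mid\Bm)$ under projection onto the coordinates of $\Am$ is exactly the row space of $\Am$, of dimension $\rank\Am$, while an element of the row space annihilated by this projection has the shape $(\mathbf{0}\mid\yv)$ with $\yv$ in the row space of $\Bm$, so the kernel has dimension at most $\rank\Bm$; rank--nullity then concludes. Applying this with $\Am=\Lambda_i(\Gm)$ and $\Bm=\Lambda_i(\Xm)$ together with the two facts above yields
\[
\dim\Lambda_i(\CA)\ =\ \rank\big(\Lambda_i(\Gm)\mid\Lambda_i(\Xm)\big)\ \le\ (k+i)+d.
\]

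The calculations are routine; the one delicate point, and the thing I would be careful to pin down, is the admissible range of $i$. The displayed bounds cannot hold for every $i\ge 0$: as soon as $k+i$ exceeds $n+\ell$ the lower bound is impossible, since $\Lambda_i(\CA)\subseteq\fqm^{n+\ell}$. The statement should therefore be read for $i$ small enough that $\gab{k+i}{\gv}$ is meaningful — the regime $i\le n-k-1$ used in Overbeck's attack — which is precisely where Proposition~\ref{prop:dsg_gab} gives $\rank\Lambda_i(\Gm)=k+i$. I would also point out explicitly that the randomness of $\Xm$ plays no role in either inequality: both hold for \emph{every} $\Xm\in\MS{k}{\ell}{\fqm}$, genericity being relevant only to the complementary question of whether the upper bound $k+i+d$ is typically attained.
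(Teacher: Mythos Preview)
The paper states this proposition without proof, so there is nothing to compare your argument against directly. Your proof is correct: the block decomposition $\Lambda_i(\Xm\mid\Gm)=(\Lambda_i(\Xm)\mid\Lambda_i(\Gm))$, projection onto the last $n$ coordinates for the lower bound (via Proposition~\ref{prop:dsg_gab}), and sub-additivity of rank under horizontal concatenation for the upper bound are exactly the natural steps. Your caveat on the admissible range of $i$ is well placed---the lower bound as written needs $\rank\Lambda_i(\Gm)=k+i$, hence $i\le n-k$, and the paper only ever invokes the case $i=n-k-1$---and your remark that the randomness of $\Xm$ plays no role in the two inequalities themselves (genericity being relevant only to the sharpness of the upper bound) is also accurate.
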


Note that by construction $\ell \le n$ and in Overbeck's attack, the integer  $i$ is equal to $n-k - 1$ so that we have both 
$d = \ell$ and, with high probability, the upper bound in \eqref{ineq:dsg} is  actually an equality, namely 
\[
\dim \Lambda _{n-k-1}\left(\CA \right) = k + (n - k - 1) + d = n + \ell - 1.
\]
This implies that the dimension of $\dual{\Lambda _{i}\left(\CA \right)}$ is equal to $1$. This fact is 
then harnessed by \citep{O08} to recover an equivalent Gabidulin code which enables to decrypt any ciphertext. 

\begin{proposition}\citep{O08}
Assume that the public key is $\Gp = \Sm \left(\Xm \mid \Gm \right) \Pm$
with $\Xm \in \MS{k}{\ell}{\fqm}$, $\Pm \in \GL_{n+\ell}(\fq)$ and $\Gm$ generates a Gabdidulin code $\gab{k}{\gv}$.
If the dimension of $\Lambda_{n-k-1}\left(\CA \right)^\perp$ is equal to $1$, then it is possible to recover
(with $O\left( (n+\ell)^3 \right)$ field operations) alternative matrices $\Xm^* \in \MS{k}{\ell}{\fqm}$, $\Pm^* \in \GL_{n+\ell}(\fq)$ and $\Gm^*$ that generates a $(n,k)-$Gabdidulin code such that 
 \[
 \Gp = \Sm \left(\Xm^* \mid \Gm^* \right) \Pm^*
 \]
\end{proposition}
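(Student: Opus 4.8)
The plan is to follow Overbeck's original argument but keep careful track of the precise positions of the zero columns produced by the operator $\Lambda_{n-k-1}$. The starting point is Proposition~\ref{prop:dsg_gab}: writing $\Gp = \Sm(\Xm \mid \Gm)\Pm$ and using Lemma~\ref{lem:frob_prop}(1)--(2) together with the fact that $\Pm \in \GL_{n+\ell}(\fq)$ has entries in $\fq$, one gets $\Lambda_{n-k-1}(\Gp) = \Lambda_{n-k-1}(\Sm)\,\Lambda_{n-k-1}(\Xm \mid \Gm)\,\Pm$ after reordering the Frobenius powers so that $\Pm$ factors out on the right. The code $\Lambda_{n-k-1}(\CA)$ with $\CA$ generated by $(\Xm \mid \Gm)$ therefore has dimension $n+\ell-1$ by hypothesis, so its dual has dimension $1$; let $\hv$ be a generator of $\dual{\Lambda_{n-k-1}(\CA)}$. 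Since $\Lambda_{n-k-1}(\Gm)$ generates the Gabidulin code $\gab{n-1}{\gv}$ of length $n$ whose dual is one-dimensional and $\hv$ must vanish on the $\Gm$-block while being unconstrained on the $\Xm$-block only up to the single relation, the structure of $\hv$ is forced: its last $n$ coordinates span $\dual{\gab{n-1}{\gv}}$, which is again a Gabidulin code of the form $\gab{1}{\cdot}$.

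Next I would transport this dual vector back through $\Pm$. Because $\Lambda_{n-k-1}(\CA)^{\perp}$ is spanned by $\hv$, the code $\Lambda_{n-k-1}(\Gp)^{\perp}$ is spanned by $\hv(\Pm^{-1})^{\top}$ — here using that $\Pm^{\top}$ is also invertible over $\fq$. From $\hv$ one recovers, via Proposition~\ref{prop:rank_reduction} applied to the $\Gm$-part, a matrix $\Tm \in \GL_{n+\ell}(\fq)$ that moves the $\ell$ distortion columns to the front and exhibits the underlying $q$-Vandermonde structure on the remaining $n$ columns. Concretely, one extracts from the one-dimensional dual the vector $\gv^{*[1]}$ up to scalar — this is exactly the move in Overbeck's attack — and from it reconstructs a full generator matrix $\Gm^* = \Lambda_{k-1}(\gv^*)$ of an $(n,k)$-Gabidulin code $\gab{k}{\gv^*}$. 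Proposition~\ref{prop:ScramblingAGabidulinAtRight} guarantees that right multiplication by any $\fq$-invertible matrix keeps us inside the class of Gabidulin generator matrices, which is what lets the column scrambler be absorbed.

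Finally I would assemble the alternative decomposition. Having $\Gm^*$ and the permutation/scrambler data encoded in $\Tm$, one sets $\Pm^* $ to be the appropriate element of $\GL_{n+\ell}(\fq)$ aligning the blocks, defines $\Xm^*$ as the $k \times \ell$ block of $\Gp (\Pm^*)^{-1}$ that is left over after the Gabidulin part is split off, and defines $\Sm^*$ (equal to the original $\Sm$ up to a known change of basis) as the $\GL_k(\fqm)$ matrix relating the rows of $\Gp(\Pm^*)^{-1}$ to $(\Xm^* \mid \Gm^*)$; since both matrices generate the same code of dimension $k$, such an $\Sm^*$ exists and is invertible. The complexity claim of $O((n+\ell)^3)$ follows because every step — computing $\Lambda_{n-k-1}(\Gp)$, a Gaussian elimination to get the one-dimensional dual, the reconstruction of $\gv^*$, and the final linear-algebra solves for $\Xm^*$ and $\Sm^*$ — is a bounded number of linear-algebra operations on matrices of size $O(n+\ell)$. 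The main obstacle is the middle step: arguing rigorously that the one-dimensional dual vector $\hv$ really does isolate the Gabidulin structure (i.e.\ that no "mixing" between the $\Xm$-block and the $\Gm$-block can destroy the $q$-Vandermonde shape one reads off), which is precisely where the hypothesis $\dim \Lambda_{n-k-1}(\CA)^{\perp} = 1$ is essential and must be used with care.
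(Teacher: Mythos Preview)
The paper does not give its own proof of this proposition: it is stated with the citation \citep{O08} and the reader is explicitly referred to Overbeck's paper for the details (``The reader can refer to \citep{O08} for details concerning the attack''). So there is nothing in the paper to compare your argument against; you are reconstructing Overbeck's proof, not matching the paper's.

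As a reconstruction, your outline follows the correct strategy, but one step is written incorrectly and one is presented in the wrong direction. The identity $\Lambda_{n-k-1}(\Gp)=\Lambda_{n-k-1}(\Sm)\,\Lambda_{n-k-1}(\Xm\mid\Gm)\,\Pm$ is ill-formed: $\Lambda_{n-k-1}(\Sm)$ has size $(n-k)k\times k$ and cannot be multiplied on the right by an $(n-k)k\times(n+\ell)$ matrix. What actually holds is
\[
\Lambda_{n-k-1}(\Gp)=\mathrm{diag}\!\left(\Sm^{[0]},\dots,\Sm^{[n-k-1]}\right)\,\Lambda_{n-k-1}(\Xm\mid\Gm)\,\Pm,
\]
with the block-diagonal matrix invertible by Lemma~\ref{lem:frob_prop}. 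More importantly, your exposition starts from a generator $\hv$ of $\Lambda_{n-k-1}(\CA)^\perp$ and transports it through $\Pm$; but the attacker does not know $\CA$ or $\Pm$. The argument must run the other way: compute a generator $\vv$ of $\Lambda_{n-k-1}(\Cp)^\perp$ from $\Gp$, observe that $\vv\Pm^{T}$ spans $\Lambda_{n-k-1}(\CA)^\perp$, and use the fact that $(\ZZ\mid\hv_2^*)$ with $\hv_2^*$ spanning $\gab{n-1}{\gv}^\perp$ is always in this dual, hence equals $\vv\Pm^{T}$ up to scalar. This forces $\norm{\vv}=n$, and Proposition~\ref{prop:rank_reduction} applied to $\vv$ (not to the unknown $\hv$) yields the $\Tm\in\GL_{n+\ell}(\fq)$ separating the $\ell$ zero positions from the $n$ Gabidulin positions. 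Once this direction is fixed, the remainder of your sketch (recovering $\gv^*$ from the nonzero part of $\vv\Tm$, setting $\Pm^*=\Tm^{-1}$, and reading off $\Xm^*$) is correct, and the complexity count is fine.
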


Overbeck's attack uses crucially  two important facts : 
 the column scrambler matrix $\Pm$ is defined on the based field $\fq$, 
 and the codimension of $\Lambda _{n-k-1}\left(\CA \right)$ is $1$.
Several works propose to resist to Overbeck's attack by taking special 
distortion matrix so that the second property is not true as \citep{L10,RGH10}. In the sequel, we focus on the variant of \cite{RGH10} which is usually called ``the smart approach''.  \cite{HMR17} also proposed an attack of that variant, focusing instead on recovering elements of rank one rather than the dimension of the dual space. In the next section, we present a new attack that exploits the structure of the dual space and is based on a new view of the ``smart'' approach \citep{RGH10}.

\section{A New Attack on the Smart Approach of the GPT Cryptosystem} \label{Our_Attack}

Here we describe the reparation given by \cite{RGH10} and we give our new algorithm that recovers an alternative secret key in polynomial time.
\subsection{Description}
The only difference is in the key generation algorithm, specially in the choice of $\Xm$. The authors proposed to take $\Xm \in \MS{k}{\ell}{\fqm}$ that is a concatenation of a $q-$Vandermonde matrix $\Xm_1 \in \MS{k}{a}{\fqm}$ and a random matrix $\Xm_2 \in \MS{k}{\ell-a}{\fqm}$ with $0 < a < \ell$. That is to say $\Xm = \left(\Xm_1 \mid \Xm_2 \right).$ 

With the above construction, it was shown by \cite{RGH10} that the corresponding public code $\Cp$ satisfies $\dim \Lambda_{n-k-1}\left(\Cp \right)^{\perp} \neq 1$ and hence, Overbeck's attack fails. In the following, we show that one can modify the public code $\Cp$ to make Overbeck's attack succeed again.  
\subsection{Cryptanalysis}
The attack we present here is inspired by the attack presented by \cite{OTK15} in a hamming metric context. In the sequel, $\Gm$ is a $q-$Vandermonde matrix from $\MS{k}{n}{\fqm}$ that generates a Gabidulin code $\gab{k}{\gv}$, $\gv$ being the first row of $\Gm$.  

Let $\Sm \in \GL_k(\fqm)$, $\Xm_2 \in \MS{k}{\ell-a}{\fqm}$, $\bv= \left(b_1, \cdots{}, b_a \right)$ and 
\begin{equation}
\Xm_1 =
\begin{pmatrix}
b_{1}^{[0]} & \cdots{} & b_{a}^{[0]} \\
\vdots{}      &              &  \vdots{} \\
b_{1}^{[k-1]} & \cdots{} & b_{a}^{[k-1]}
\end{pmatrix}.
\end{equation} 
We have $\Gp= \Sm \left(\Xm_1 \mid \Xm_2 \mid \Gm \right) \Pm$ with $\Pm \in \GL_{n+\ell}(\fq)$. 
We start the cryptanalysis by the following lemma:
\begin{lemma}\label{lem:attack}
There exists $\Pm^* \in \GL_{n+\ell}(\fq)$ and $\Gm^* \in \MS{k}{n+s}{\fqm}$ a generator matrix of a Gabidulin code  such that 
\[
\Gp= \Sm \left( \ZZ \mid \Xm_2 \mid \Gm^* \right) \Pm^*
\]
$s$ being an integer satisfying $0 \leq s \leq a$ and $n+s \leq m$.  
\end{lemma}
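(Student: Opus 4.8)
The goal is to show that the $q$-Vandermonde block $\Xm_1$ can be absorbed into the Gabidulin part $\Gm$, at the cost of enlarging $\Gm$ to a wider $q$-Vandermonde matrix $\Gm^*$ and of adjusting the $\fq$-column scrambler. The idea is that the concatenation $(\Xm_1 \mid \Gm)$ is itself built from two row vectors $\bv$ and $\gv$ via the same powering pattern, so $(\Xm_1 \mid \Gm)$ is exactly $\Lambda$-like: its $j$-th row is $(\bv \mid \gv)^{[j-1]}$ for $j = 1,\dots,k$. Hence $(\Xm_1 \mid \Gm)$ generates a Gabidulin-type code with defining vector $\hv = (\bv \mid \gv) \in \fqm^{a+n}$, \emph{provided} $\norm{\hv} = a+n$. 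In general $\norm{\hv} = n + s$ for some $0 \le s \le a$, since the coordinates of $\gv$ already span an $n$-dimensional space and the $a$ extra coordinates $b_1,\dots,b_a$ can only raise the dimension by at most $a$.

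First I would set $\hv = (\bv \mid \gv)$ and $s = \norm{\hv} - n$, so $0 \le s \le a$. Applying Proposition~\ref{prop:rank_reduction} to the row vector $\hv$ (or, equivalently, to the matrix $(\Xm_1 \mid \Gm)$ whose column-rank over $\fq$ is $\norm{\hv}$), there is $\Tm_1 \in \GL_{n+a}(\fq)$ with $\hv \Tm_1 = (\hv^* \mid \ZZ)$, where $\hv^* \in \fqm^{n+s}$ and $\norm{\hv^*} = n+s$. Because each row of $(\Xm_1 \mid \Gm)$ is $\hv^{[j-1]}$ and $\Tm_1$ has entries in $\fq$, Lemma~\ref{lem:frob_prop} gives $\hv^{[j-1]}\Tm_1 = (\hv^* \mid \ZZ)^{[j-1]} = ((\hv^*)^{[j-1]} \mid \ZZ)$. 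Therefore
\[
(\Xm_1 \mid \Gm)\,\Tm_1 = (\Gm^* \mid \ZZ),
\]
where $\Gm^* \in \MS{k}{n+s}{\fqm}$ is the $q$-Vandermonde matrix with first row $\hv^*$; since $\norm{\hv^*} = n+s$, this $\Gm^*$ generates the $(n+s,k)$-Gabidulin code $\gab{k}{\hv^*}$. The condition $n+s \le m$ is exactly what is required for $\hv^*$ to have full rank $n+s$ over a degree-$m$ extension, and it can be taken as a (mild) hypothesis on the parameters, or noted as a necessary genericity condition; I would state it explicitly as the excerpt's statement does.

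Next I would reorganize the column blocks. Write $\Gp = \Sm\,(\Xm_1 \mid \Xm_2 \mid \Gm)\,\Pm$. Let $\Pi \in \GL_{n+\ell}(\fq)$ be the permutation matrix that moves the $\Xm_2$-block (of width $\ell-a$) to the front, so that $(\Xm_1 \mid \Xm_2 \mid \Gm)\,\Pi = (\Xm_2 \mid \Xm_1 \mid \Gm)$. Now let $\Tm = \mathrm{diag}(\Im_{\ell-a},\,\Tm_1) \in \GL_{n+\ell}(\fq)$, which acts as identity on the $\Xm_2$ columns and as $\Tm_1$ on the remaining $n+a$ columns. Then
\[
(\Xm_2 \mid \Xm_1 \mid \Gm)\,\Tm = (\Xm_2 \mid (\Xm_1 \mid \Gm)\Tm_1) = (\Xm_2 \mid \Gm^* \mid \ZZ).
\]
A final permutation $\Pi' \in \GL_{n+\ell}(\fq)$ moves the zero block of width $a-s$ to the very front, giving $(\Xm_2 \mid \Gm^* \mid \ZZ)\,\Pi' = (\ZZ \mid \Xm_2 \mid \Gm^*)$ (with leading zero block of width $a-s$, noting $(n+s)+(\ell-a)+(a-s) = n+\ell$). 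Collecting everything, set $\Pm^* = \Pm^{-1}\,\Pi\,\Tm\,\Pi'$; wait — one must be careful about the side. Since $\Tm_1$, $\Pi$, $\Pi'$ all multiply on the right of $(\Xm_1\mid\Xm_2\mid\Gm)$, and $\Gp = \Sm(\Xm_1\mid\Xm_2\mid\Gm)\Pm$, I would instead write $\Gp = \Sm(\Xm_1\mid\Xm_2\mid\Gm)\Pi\Tm\Pi'\,(\Pi\Tm\Pi')^{-1}\Pm$ and absorb: with $\Pm^* \eqdef (\Pi\Tm\Pi')^{-1}\Pm \in \GL_{n+\ell}(\fq)$ we get $\Gp = \Sm(\ZZ \mid \Xm_2 \mid \Gm^*)\Pm^*$, as claimed.

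\textbf{The main obstacle.} The only genuinely delicate point is keeping track of the bookkeeping: which matrix multiplies on which side, in which order the block permutations are composed, and verifying that the widths add up ($(n+s)+(\ell-a)+(a-s)=n+\ell$ throughout). The mathematical content is light — it is entirely the observation that $(\Xm_1\mid\Gm)$ is a $q$-Vandermonde matrix with defining vector $\hv=(\bv\mid\gv)$, so Proposition~\ref{prop:rank_reduction} together with the $\fq$-linearity and multiplicativity of the Frobenius (Lemma~\ref{lem:frob_prop}) collapses it to a genuine full-rank $q$-Vandermonde block plus zeros. One should also double-check that $\Gm^*$ really has the \emph{$q$-Vandermonde} structure (first row $\hv^*$, rows $(\hv^*)^{[j-1]}$) and not merely generates some Gabidulin code, since the subsequent application of Overbeck's attack relies on that explicit form; this follows because $\Tm_1$ commutes with the Frobenius action on rows.
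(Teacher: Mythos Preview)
Your proof is correct and follows essentially the same approach as the paper's: both recognize that $(\Xm_1\mid\Gm)$ is a $q$-Vandermonde matrix built on $\hv=(\bv\mid\gv)$, apply Proposition~\ref{prop:rank_reduction} to collapse it to $(\Gm^*\mid\ZZ)$ (the paper puts the zeros on the left, you on the right), and then absorb the required column rearrangements into an invertible $\fq$-matrix multiplied into $\Pm$. Your construction of $\Pm^*$ via explicit permutations $\Pi,\Pi'$ and the block-diagonal $\Tm$ is just a more detailed version of the paper's one-line ``there exists $\Rm\in\GL_{n+\ell}(\fq)$ such that\dots''.

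One small correction: you treat $n+s\le m$ as a hypothesis or genericity condition, but it is automatic. The rank weight $\norm{\hv}$ is the $\fq$-dimension of the span of the coordinates of $\hv$, which is a subspace of $\fqm$ and hence has dimension at most $m$; thus $n+s=\norm{\hv}\le m$ always holds. The paper makes exactly this observation in its final sentence.
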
 
\begin{proof}
Let $\gv^\prime = \left(\bv \mid \gv \right) \in \fqm^{a+n}$. Since $\norm{\gv^\prime} \geq \norm{\gv} =n$, let $s$ be an integer such that $\norm{\gv^\prime} = n+s$. Clearly, we have $s \leq a$ and $\norm{\left(\Xm_1 \mid \Gm \right)}=\norm{\gv^\prime}= n+s$ from the definition of the column rank together with Lemma \ref{lem:frob_prop}. From Proposition \ref{prop:rank_reduction},  there exists a matrix $\Qm \in \GL_{n+a}(\fq)$ such that $\left(\Xm_1 \mid \Gm \right) \Qm$ has $a-s$ many zero columns. More precisely,  $\left(\Xm_1 \mid \Gm \right) \Qm = \left( \ZZ \mid \Gm^* \right)$ where $\Gm^* \in \MS{k}{n+s}{\fqm}$ is a generator matrix of a Gabidulin code $\gab{k}{\gv^*}$ with $\gv^\prime \Qm = \left(\ZZ \mid \gv^* \right)$. This implies that there exists a matrix $\Rm \in \GL_{n+\ell}(\fq)$ such that $\left(\Xm_1 \mid \Xm_2 \mid \Gm \right) \Rm =\left( \ZZ \mid \Xm_2 \mid \Gm^* \right)$. To finish the proof, take $\Pm^*= \Rm^{-1}\Pm$. Since $\gv^\prime$ belongs to $\fqm^{a+n}$, we have $\norm{\gv^\prime} = n+s \leq m$ by the definition of the rank weight.     
\end{proof}
In the sequel, for any integer $\ell$ and given two linear codes $\CC$ and $\CC^{\prime}$, we will say that $\CC$ is the code $\CC^{\prime}$ with $\ell$ redundancies if a generator matrix of $\CC$ is a generator matrix of $\CC^{\prime}$ with $\ell$ additional columns.
Let $\Cp$ be the code generated by $\Gp$. We then have the following proposition:
\begin{proposition}\label{propo:redundancies}
The code $\Cp$ is the public code of a general GPT cryptosystem with $w=a-s$ redundancies. 
\end{proposition}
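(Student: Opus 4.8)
The plan is to use Lemma~\ref{lem:attack} directly, and then to identify the resulting decomposition with the public key produced by a genuine $\KG$ call of the general GPT cryptosystem (from Section~2.2) with an appropriately smaller distortion-matrix width, plus some extra zero columns. First I would invoke Lemma~\ref{lem:attack} to write $\Gp = \Sm\left(\ZZ \mid \Xm_2 \mid \Gm^*\right)\Pm^*$, where $\Gm^* \in \MS{k}{n+s}{\fqm}$ generates an $(n+s,k)$-Gabidulin code $\gab{k}{\gv^*}$, with $0 \le s \le a$ and $n+s \le m$, and $\Pm^* \in \GL_{n+\ell}(\fq)$. The block $\ZZ$ here has $a-s = w$ columns, and $\Xm_2$ has $\ell - a$ columns.

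Next I would argue that the code generated by $\left(\Xm_2 \mid \Gm^*\right)$ — once the $w$ zero columns are dropped — is \emph{exactly} the public code of a general GPT cryptosystem. Concretely, set $\Gm' \eqdef \Gm^*$, which is a $q$-Vandermonde generator matrix of a Gabidulin code of length $n+s \le m$; set $\Xm' \eqdef \Xm_2 \in \MS{k}{\ell-a}{\fqm}$, which plays the role of the distortion matrix of width $\ell' \eqdef \ell - a$; and observe that $\Sm \in \GL_k(\fqm)$ already has the right form. One checks that the parameters $(n+s, m, k, \ell', q)$ satisfy the constraints required by $\KG$, namely $k < n+s \le m$ and $\ell' \ll n+s$ (the latter inheriting from $\ell \ll n$), and that the decoding radius $t = (n-k)/2$ used with $\Gp$ is at most the true decoding radius $(n+s-k)/2$ of $\gab{k}{\gv^*}$, so a legitimately generated GPT key with these parameters can be taken to have the same $\Sm$, a right column scrambler $\Pm' \in \GL_{n+s+\ell'}(\fq)$, and generator matrix $\Sm\left(\Xm' \mid \Gm'\right)\Pm' = \Sm\left(\Xm_2 \mid \Gm^*\right)\Pm'$. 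Finally, since $\Pm^* \in \GL_{n+\ell}(\fq)$ is a permutation-composition isometry over $\fq$ and the $w$ leading columns of $\Sm\left(\ZZ \mid \Xm_2 \mid \Gm^*\right)$ are identically zero, the code $\Cp$ generated by $\Gp$ is obtained from the code generated by $\Sm\left(\Xm_2 \mid \Gm^*\right)$ by adjoining $w$ extra columns (the zero columns, then transported by $\Pm^*$), which is precisely the notion ``$\CC'$ with $w$ redundancies'' defined just before the proposition. Matching $n+\ell = (n+s) + \ell' + w$ confirms the lengths are consistent.

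The routine part is checking the parameter inequalities and the bookkeeping of column counts; the one point that needs genuine care is the claim that appending $w$ all-zero columns and then applying an $\fq$-invertible map yields a code that matches the \emph{public code} (not just some code) of a GPT instance — i.e. that the definition of ``redundancies'' via generator matrices is invariant under the choice of generator matrix and under right multiplication by $\GL_{n+\ell}(\fq)$. I expect this to be the main obstacle: one must observe that $\Pm^*$ being over the base field $\fq$ is exactly what makes "adding zero columns" commute with "applying the column scrambler", so that the redundancy structure is preserved; I would spell this out by noting that $\left(\ZZ_{k\times w} \mid \Nm\right)\Pm^*$ has the same column space (over $\fq$) relationship to $\Nm$ as before, since the zero columns of the left-hand matrix contribute nothing. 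With that observation in place, identifying $\Sm$, $\Xm_2$, $\Gm^*$ with the $(\Sm, \Xm, \Gm)$ triple of a general GPT key and $w = a - s$ with the number of redundancies completes the proof.
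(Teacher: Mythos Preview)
Your overall strategy matches the paper's: invoke Lemma~\ref{lem:attack} and then argue that the resulting decomposition is a GPT public key plus $w=a-s$ extra columns. However, the step you correctly flag as the main obstacle is not actually resolved by the argument you propose, and your description of it contains a genuine error.

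You write that $\Cp$ is ``obtained from the code generated by $\Sm(\Xm_2\mid\Gm^*)$ by adjoining $w$ extra columns (the zero columns, then transported by $\Pm^*$)'', and that ``adding zero columns commutes with applying the column scrambler''. But $\Pm^*\in\GL_{n+\ell}(\fq)$ is \emph{not} a permutation matrix: it mixes all columns $\fq$-linearly, so after right-multiplication by $\Pm^*$ there are in general no zero columns left, and no block of $w$ columns of $\Gp$ can be identified as ``the transported zeros''. The two operations do not commute in any useful sense, and the phrase ``permutation-composition isometry'' is misleading here.

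What actually works---and what the paper does---is to split $\Pm^*$ by \emph{rows} rather than columns: write
\[
\Pm^*=\begin{pmatrix}\Qm_1\\ \Qm_2\end{pmatrix},\qquad \Qm_1\in\MS{w}{n+\ell}{\fq},\quad \Qm_2\in\MS{n+\ell-w}{n+\ell}{\fq}.
\]
The zero block then kills $\Qm_1$, giving $\Gp=\Sm(\Xm_2\mid\Gm^*)\Qm_2$. Since $\Pm^*$ is invertible, $\Qm_2$ has full row rank $n+\ell-w$, so some set of $n+\ell-w$ of its columns forms an invertible matrix $\Qm_2^*\in\GL_{n+\ell-w}(\fq)$. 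Up to that column reordering one obtains $\Gp=(\Gp^*\mid\Xm)$ where $\Gp^*=\Sm(\Xm_2\mid\Gm^*)\Qm_2^*$ is now literally a general GPT public matrix (column scrambler $\Qm_2^*$ over $\fq$, distortion matrix $\Xm_2$, Gabidulin generator $\Gm^*$), and the remaining $w$ columns are the redundancies in the sense of the definition. The role of $\Pm^*$ having entries in $\fq$ is not to make anything ``commute'' but to ensure that $\Qm_2^*$ inherits $\fq$-entries and is therefore a legitimate GPT scrambler. Your plan becomes a complete proof once you replace the ``commuting'' heuristic by this row-block argument.
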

\begin{proof}
We have $\Gp= \Sm \left( \ZZ \mid \Xm_2 \mid \Gm^* \right) \Pm^*$. Let us suppose that $\Pm^* = 
\begin{pmatrix}
\Qm_1 \\
\Qm_2
\end{pmatrix}$
with $\Qm_1 \in \MS{w}{n+\ell}{\fq}$ and $\Qm_2 \in \MS{n+\ell-w}{n+\ell}{\fq}$. We have 
$\Gp = \Sm \left( \Xm_2 \mid \Gm^* \right) \Qm_2$ and $\rank{(\Qm_2)} = n+\ell-w$ \footnote{ $\rank{(\Qm_2)}$ is the rank over $\fqm$ in contrast to $\norm{\Qm_2}$, which denotes the column rank over $\fq$}. Let us suppose without loss of generality that the matrix $\Qm_2^*$ of the first $n+\ell-w$ columns of $\Qm_2$ is of full rank. Let $\Gp^*= \Sm \left( \Xm_2 \mid \Gm^* \right) \Qm_2^*$ and $\Xm = \Sm\left( \Xm_2 \mid \Gm^* \right) \Qm_2^{**}$ where $\Qm_2^{**}$ is the last $w$ columns of $\Qm_2$. Then $\Gp = \left( \Gp^* \mid \Xm \right)$. One can remark to finish that $\Gp^*$ is a generator matrix of a general GPT cryptosystem.      
\end{proof}

From the above proposition, if $w$ redundancies of $\Cp$ are identified and removed, a cryptanalyst can use Overbeck's attack to build an alternative secret key. One can remark that the notion of redundancies here is very related to the notion of ``information set'' \citep{P62} of the code $\CC_2$ generated by the matrix $\Qm_2$ in the proof of the previous proposition.  In fact, we are looking for an information set of $\CC_2$, since the corresponding columns of such a set in $\Gp$ will generate the public code of a general GPT cryptosystem. 
We will find such a set by eliminating progressively (one by one) $w$ columns of $\Gp$, corresponding to a ``redundancy set'' $\Im=\{ i_1, ..., i_w \} \subset \{ 1, 2, ...,n+\ell \}$ of $\CC_2$, or redundancies of $\Cp$. The reason we succeed by doing that is given by Proposition \ref{algoReason}. 
By abuse of language, a set of $w$ redundancies of $\Cp$ will be also called a "redundancy set" of $\Cp$. 

In the sequel, let $f=n+s-k$ and $\Lambda_f(\Cp)$ the code generated by $\Lambda_f(\Gp)$. For any subset $\Jm \subset \{ 1, 2, ...,n+\ell \}$, we denote by $\Cp^ {\Jm}$ the code obtained by puncturing $\Cp$ at all the positions corresponding to the elements in $\Jm$. For a matrix $\Mm$ with $n+\ell$ columns, $\Mm^{\Jm}$ is the matrix obtained from $\Mm$ by removing all the columns with positions corresponding to the  elements in $\Jm$. It is clear that $\Jm$ is a ``redundancy set'' of $\Cp$ if and only if $\Qm_2^{\Jm}$ is non singular. We have the following proposition:

\begin{proposition}\label{algoReason}
A set $\Im=\{ i_1, ..., i_w \} \subset \{ 1, 2, ...,n+\ell \}$ is a ``redundancy set'' of $\Cp$ if and only if for any subset $\Jm \subset \Im$,  
\[
\dim{\Lambda_f(\Cp^ {\Jm})}= n+s+ \ell -a
\]

\end{proposition}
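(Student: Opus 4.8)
The plan is to analyze how puncturing interacts with the operator $\Lambda_f$ and with the decomposition $\Gp = \Sm(\ZZ \mid \Xm_2 \mid \Gm^*)\Pm^*$ from Lemma~\ref{lem:attack}. First I would record the baseline: by Proposition~\ref{prop:dsg_gab}, since $\Gm^*$ generates the Gabidulin code $\gab{k}{\gv^*}$ of length $n+s$ and $f = n+s-k \le (n+s)-k-1$ fails only in the degenerate case $f = n+s-k$ — here one should note $\Lambda_f$ with $i = f = n+s-k$ gives $\dim \Lambda_f(\gab{k}{\gv^*}) = \min\{n+s, k+f\} = n+s$, i.e. the whole ambient space. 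Then, arguing exactly as in the proposition preceding Overbeck's attack (the one bounding $\dim\Lambda_i(\CA)$), the code $\left(\Xm_2 \mid \Gm^*\right)$ has $\dim\Lambda_f$ equal to $n+s + \min\{(f+1)k, \ell-a\}$, and with the parameters at hand this upper bound is $n+s+(\ell-a)$ (one uses $\ell - a \le n$, so generically the random block $\Xm_2$ contributes its full width $\ell-a$). Since $\Pm^*$ is invertible over $\fq$ and right multiplication by such a matrix commutes with $\Lambda_f$ (Lemma~\ref{lem:frob_prop}, since $(\Mm\Pm^*)^{[i]} = \Mm^{[i]}\Pm^*$), and since prepending the $w = a-s$ zero columns does not change the dimension of the code $\Lambda_f$, we get $\dim\Lambda_f(\Cp) = n+s+\ell-a$ as a reference value. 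The target quantity in the statement is exactly this reference value, so the proposition says: puncturing at any subset $\Jm \subseteq \Im$ leaves $\dim\Lambda_f$ unchanged precisely when $\Im$ is a redundancy set.

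Next I would translate the puncturing into matrix language. Puncturing $\Cp$ at $\Jm$ amounts to deleting the corresponding columns of $\Gp$, hence of $(\ZZ \mid \Xm_2 \mid \Gm^*)\Pm^*$; equivalently it deletes the corresponding columns of $\Pm^*$, i.e. replaces $\Pm^*$ by $(\Pm^*)^{\Jm}$, which has $n+\ell - |\Jm|$ columns. So $\Lambda_f(\Cp^{\Jm})$ is the code generated by $\Lambda_f\big(\Sm(\ZZ\mid\Xm_2\mid\Gm^*)\big)\cdot (\Pm^*)^{\Jm}$, and its dimension is the rank over $\fqm$ of $\Lambda_f(\ZZ\mid\Xm_2\mid\Gm^*)\cdot(\Pm^*)^{\Jm}$ (the block-diagonal $\Sm,\Sm^{[1]},\dots$ is invertible and drops out). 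Write $\Pm^* = \begin{pmatrix}\Qm_1\\\Qm_2\end{pmatrix}$ as in Proposition~\ref{propo:redundancies}, with $\Qm_1$ having $w$ rows indexing the zero block. Then $\Lambda_f(\ZZ\mid\Xm_2\mid\Gm^*)\cdot \Pm^* = \Lambda_f(\Xm_2\mid\Gm^*)\cdot\Qm_2$, and more generally after puncturing, $\Lambda_f(\Cp^{\Jm})$ is generated by $\Lambda_f(\Xm_2\mid\Gm^*)\cdot\Qm_2^{\Jm}$. Since $\Lambda_f(\Xm_2\mid\Gm^*)$ has full row-rank image of dimension $n+s+\ell-a$ onto a space of that dimension (it surjects onto $\fqm^{n+\ell-w}$, the full ambient space of $\CC_2$, as $n+\ell-w = n+\ell-(a-s) = n+s+\ell-a$), the dimension of $\Lambda_f(\Cp^{\Jm})$ equals $\rank_{\fqm}\big(\Qm_2^{\Jm}\big)$.

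Now the proposition reduces to a purely linear-algebraic statement about $\Qm_2$: the set $\Im$ is a redundancy set iff $\Qm_2^{\Im}$ (equivalently $\Qm_2^{\Jm}$ for every $\Jm\subseteq\Im$) has full rank $n+\ell-w$, i.e. deleting those columns loses no rank. For the forward direction: if $\Im$ is a redundancy set then $\Qm_2^{\Im}$ is nonsingular of size $(n+\ell-w)$, hence every intermediate $\Qm_2^{\Jm}$ with $\Jm\subseteq\Im$ still has rank $n+\ell-w$ (deleting fewer columns cannot decrease rank below that), giving the dimension equality. For the converse: if every $\Qm_2^{\Jm}$ with $\Jm\subseteq\Im$ has full rank — in particular $\Jm=\Im$ — then $\Qm_2^{\Im}$ is a full-rank $(n+\ell-w)\times(n+\ell-w)$ matrix, hence nonsingular, so $\Im$ is a redundancy set; here I would invoke the equivalence noted just before the proposition, that $\Jm$ is a redundancy set iff $\Qm_2^{\Jm}$ is nonsingular.

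The main obstacle I anticipate is pinning down rigorously that $\dim\Lambda_f(\Xm_2\mid\Gm^*) = n+s+\ell-a$ with the image being \emph{all} of $\fqm^{n+\ell-w}$ — i.e. that the Gabidulin block already saturates the length-$(n+s)$ coordinates under $\Lambda_f$ (from the $f = n+s-k$ case of Proposition~\ref{prop:dsg_gab}, which is the ``codimension drops to the extreme'' phenomenon) and that the random block $\Xm_2$, of width $\ell-a \le n \le m$, contributes its full width so that no rank is lost in the concatenation. This is where one must be careful whether the claim is an identity for the specific Vandermonde/Gabidulin structure (so it holds always, not just generically): since $\Gm^*$ is genuinely a $q$-Vandermonde generator of a Gabidulin code, the first part is exact by Proposition~\ref{prop:dsg_gab}; the contribution of $\Xm_2$ then follows because after reducing $(\Xm_2\mid\Gm^*)$ by a right $\GL(\fq)$ transformation one may assume $\Gm^*$ occupies $n+s$ coordinates on which $\Lambda_f$ is already surjective, so the remaining coordinates are governed by $\Xm_2$ alone. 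Once that dimension count is secured, the rest is the bookkeeping above and the elementary rank monotonicity under column deletion.
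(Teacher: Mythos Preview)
Your plan follows the same route as the paper: factor $\Lambda_f(\Gp^{\Jm})$ through $\Lambda_f(\Xm_2 \mid \Gm^*)\,\Qm_2^{\Jm}$, reduce the dimension computation to $\rank(\Qm_2^{\Jm})$, and finish with the elementary observation that $\Im$ is a redundancy set iff $\Qm_2^{\Im}$ is nonsingular, hence iff every $\Qm_2^{\Jm}$ with $\Jm\subseteq\Im$ retains full rank $n+\ell-w = n+s+\ell-a$. That is exactly the paper's argument, carried out with more care about why the invertible left factor and the commutation of $\Lambda_f$ with right multiplication by $\Pm^*\in\GL_{n+\ell}(\fq)$ let you pass to $\Qm_2^{\Jm}$.

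The one substantive gap is precisely the point you flag as the ``main obstacle''. Your last paragraph tries to establish $\rank\Lambda_f(\Xm_2\mid\Gm^*) = n+s+\ell-a$ as a structural identity. The Gabidulin half is fine: with $f=n+s-k$ the rows of $\Lambda_f(\Gm^*)$ span all of $\fqm^{n+s}$. But that only says the row span of $\Lambda_f(\Xm_2\mid\Gm^*)$ \emph{projects onto} $\fqm^{n+s}$ on the last $n+s$ coordinates; to reach full rank $n+s+(\ell-a)$ you additionally need that the left kernel of $\Lambda_f(\Gm^*)$, multiplied into $\Lambda_f(\Xm_2)$, covers all of $\fqm^{\ell-a}$. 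This is a genuine condition on $\Xm_2$ --- it fails, for instance, if some column of $\Xm_2$ coincides with a column of $\Gm^*$ --- and is not forced by the $q$-Vandermonde structure of $\Gm^*$. The paper does not attempt a deterministic argument here either: it simply writes ``since $\Xm_2$ is a random matrix, with a high probability'' the rank of the product equals $\min\{\rank \Lambda_f(\Xm_2\mid\Gm^*),\ \rank \Qm_2^{\Jm}\}$ and proceeds. Your proof should make the same probabilistic concession (you already say ``generically'' earlier in the plan) rather than claim it follows from the Gabidulin structure alone.
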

\begin{proof}
We have:
\[
\dim{\Lambda_f(\Cp^{\Jm})} = \rank{({\Lambda_f(\Gp^{\Jm})})}=\rank{(\Lambda_f\left(\Xm_2 \mid \Gm^*\right)\Qm^{\Jm}_2)}
\]
 Since $\Xm_2$ is a random matrix, with a high probability we have
 \[
\dim{\Lambda_f(\Cp^{\Jm})} = \min\{\rank{(\Lambda_f\left(\Xm_2 \mid \Gm^*\right))}, \rank{(\Qm^{\Jm}_2)} \} = \min\{n+s+\ell-a, \rank{(\Qm^{\Jm}_2)} \}
\]
If $\Im$ is a ``redundancy set'' of $\Cp,$ we will always have $\rank{(\Qm^{\Jm}_2)} = \rank{(\Qm_2)}=n+s+\ell-a$ and $\dim{\Lambda_f(\Cp^{\Jm})}=n+s+\ell-a$ for all $\Jm \subset \Im$. Else, we will have $\rank{(\Qm^{\Im}_2)} < n+s+\ell-a$ and $\dim{\Lambda_f(\Cp^{\Im})} < n+s+\ell-a$. 
\end{proof}

It is easy for an adversary to use the previous proposition to identify a ``redundancy set'' $\Im$.  To fully break the system, one can apply Overbeck's attack with $f= n+s-k-1$, but the value of $s$ is not known. For the case $m=n$, it is easy to see thanks to Lemma \ref{lem:attack}  that $s$ is equal to $0$ and in a general context $(n \leq m)$, one can remark from the same lemma that the integer $s$ is the smallest one that satisfies 
\[
\rank{(\Lambda_{n+s-k}(\Gp))}=\rank{(\Lambda_{n+s+1-k}(\Gp))}
\] 
We summarise the attack in Algorithm \ref{attack}.     
\begin{algorithm}[ht]
    \caption{Key Recovery of The Smart Approach of the GPT Cryptosystem } 
 \hspace*{\algorithmicindent} \textbf{Input} : $\Gp, t, a, k, n, \ell$ \\
 \hspace*{\algorithmicindent} \textbf{Output} : $s, \Jm=\{ i_1, ..., i_w \}, \Pm^* \in \GL_{n+\ell-a+s}(\fq)$ 
    \begin{algorithmic}[1]
    	\State{$ s \gets a$}      
        \While{$\rank{(\Lambda_{n+s-k}(\Gp))}=\rank{(\Lambda_{n+s+1-k}(\Gp))}$}
            \State{$s \gets s-1$}
        \EndWhile
        \State{$s \gets s+1$}
        \State{$w \gets a-s$}
        \State{$y \gets n+s+\ell-a $}
        \State{$f \gets n+s-k$}
        \State{$Z \gets \{ 1,...,Length(\Cp) \}$ and $J \gets [~ ]$  }
        \State{$ j \gets Random(Z)$}
        \While{$\sharp J \neq w$}
        	\If{$\dim{(\Lambda_{n+s-k}(\Cp^j))} = y $}
        		\State{$J \gets HorizontalJoin(J , [j]) $}
        		\State{$\Cp \gets \Cp^j$}
        		\State{$Z \gets \{ 1,...,Length(\Cp) \}$}
        		\State{$ j \gets Random(Z)$}
        	\Else
        		\State{$Z \gets Z \setminus \{ j \}$}
        		\State{$ j \gets Random(Z)$}
        	\EndIf
        \EndWhile
        \State{\Return $\Cp$, $J$}
        \State{Define $\Gp$ as the generator matrix of $\Cp$}
		\State{Apply Overbeck's algorithm on $\Gp$ with $f=n+s-k-1$ }
	\end{algorithmic}
	\label{attack}        
\end{algorithm}

\subsection{Complexity and Experimental Results}

During the computation phase of $s$, the main computations are 
$\rank{(\Lambda_{n+s-k}(\Gp))}$ and \\
$\rank{(\Lambda_{n+s+1-k}(\Gp))}$ which are computed with a complexity $O(k^3(n+a+1-k)^3)$. To identify a set of $w=a-s$ random redundancies, the main computation is $\dim{(\Lambda_{n+s-k}(\Cp^j))}$ (for $j \in \{1, ...,n+ \ell\}$) which can be done only once. So the complexity of this step is $O(k^3(n+s+1-k)^3)$. By considering the final step that consists to apply Overbeck's attack, the overall complexity is $O(k^3(n+a+1-k)^3)$ operations on $\fqm$ since the complexity of this final step is $O((n+\ell)^3)$ operations on $\fqm$. We implemented the attack (for $m \leq 30$ and for several values of $a$ such that $am \geq 60$ as proposed by \cite{RGH10}) on Magma V2.21-6 and a secret key was always found in less than $5$ seconds. This  confirms the efficiency of the approach.   
\\

\subsection{A Brief Comparison with the Attack from {\small \cite{HMR17}} }
Let $\Gp$ be a public generator matrix of the smart approach of the GPT cryptosystem. We recall that $\Gp= \Sm \left(\Gm \mid \Xm_1 \mid \Xm_2 \right) \Pm$ with $\Pm \in \GL_{n+\ell}(\fq)$. \cite{HMR17} decomposed $\Gp$ as follows:   
\[
\Gp= \Sm \left(\Gm \mid \Xm_1 \mid \ZZ \right) \Pm + \Sm \left(\ZZ \mid \ZZ \mid \Xm_2 \right) \Pm = \Sm \left(\Gm \mid \Xm_1 \mid \ZZ \right) \Pm + \Sm  \Xm_2 \Pm^{\prime}
\]
where $\Pm^{\prime} \in \MS{t-a}{n+\ell}{\fq}$ is a full rank matrix. They then consider the $\fqm-$linear code $\CC^\prime$ generated by $\Pm^{\prime}$ and show that the only elements of rank one in $\Lambda_{t-a}(\Cp)$ are those from $\CC^\prime,$ after showing that $\CC^\prime \subset \Lambda_{t-a}(\Cp).$ This means that one can reconstruct the matrix $\Pm^{\prime}$ or an alternative $\Pm^{*}$ by collecting elements of rank one in $\Lambda_{t-a}(\Cp)$  (these elements are found by solving a linear system parametrised by $\Lambda_{t-a}(\Gp)$). They finish their attack by computing a parity check matrix $\Hm_{\Pm^*} \in \MS{n + \ell - t+a }{n + \ell}{\fq}$ of $\CC^\prime$ from $\Pm^{*}$. Note that $\Hm_{\Pm^*}^T$ can be use in place of $\Pm^{-1}$ during the decoding process, since $\Sm  \Xm_2 \Pm^{\prime} \Hm_{\Pm^*}^T = \ZZ$ and $\Gp \Hm_{\Pm^*}^T= \Sm \left(\Gm \mid \Xm_1 \mid \ZZ \right) \Pm \Hm_{\Pm^*}^t$ is a $q-$Vandermonde matrix. This attack have a complexity in $O(k^2 n m^2 (t^2 + k))$ operations in $\fq$. 

Our view of $\Gp$ is different. As explained in Proposition \ref{propo:redundancies}, $\Gp$ is a public matrix of a general GPT cryptosystem concatenated with some useless columns. We use the Frobenius operator to extract (from $\Gp$) a public matrix $\Gp^*$ of the general GPT cryptosystem, and the last step is to apply Overbeck's attack to $\Gp^*$. We end-up with $O(k^3(n+a+1-k)^3)$ operations on $\fqm$. 

One can remark that the cost of \cite{HMR17} is better if we assume that $m$ and $n$ are equal or almost equal. But for very large $m$ compare to $n$, the cost of our attack is better.  
\\            
\section{Conclusion}
In this paper, we have shown that the smart approach of the GPT cryptosystem proposed by \cite{RGH10} to avoid Overbeck's structural attack, is equivalent to insert some redundancies to the public code of a standard GPT cryptosystem. We show how to remove the redundancies in order to be able to apply Overbeck's attack on the public code obtained. This allows a cryptanalyst to built an alternative secret key in polynomial time. Our attack is a new and simple way to show that the smart approach of the GPT cryptosystem is not secure.    
%

\section{Acknowledgement}
The author would like to thank the anonymous reviewers for their valuable comments and suggestions which helped to improve the paper.

\bibliographystyle{chicago}
\bibliography{codecrypto2}

\begin{thebibliography}{}

\bibitem[\protect\citeauthoryear{Bardet, Bros, Cabarcas, Gaborit, Perlner,
  Smith{-}Tone, Tillich, and Verbel}{Bardet et~al.}{2020}]{BBCGPSTV20}
Bardet, M., M.~Bros, D.~Cabarcas, P.~Gaborit, R.~A. Perlner, D.~Smith{-}Tone,
  J.~Tillich, and J.~A. Verbel (2020).
\newblock Algebraic attacks for solving the rank decoding and minrank problems
  without gr{\"{o}}bner basis.
\newblock {\em arXiv preprint arXiv:2002.08322\/}.

\bibitem[\protect\citeauthoryear{Berlekamp, McEliece, and van
  Tilborg}{Berlekamp et~al.}{1978}]{BMT78}
Berlekamp, E., R.~McEliece, and H.~van Tilborg (1978, May).
\newblock On the inherent intractability of certain coding problems.
\newblock {\em IEEE Trans. Inform. Theory\/}~{\em 24\/}(3), 384--386.

\bibitem[\protect\citeauthoryear{Bernstein, Lange, and Peters}{Bernstein
  et~al.}{2008}]{BLP08}
Bernstein, D.~J., T.~Lange, and C.~Peters (2008).
\newblock Attacking and defending the {McEliece} cryptosystem.
\newblock In {\em Post-Quantum Cryptography~2008}, Volume 5299 of {\em Lecture
  Notes in Comput. Sci.}, pp.\  31--46.

\bibitem[\protect\citeauthoryear{Courtois, Finiasz, and Sendrier}{Courtois
  et~al.}{2001}]{CFS01}
Courtois, N., M.~Finiasz, and N.~Sendrier (2001).
\newblock How to achieve a {McEliece}-based digital signature scheme.
\newblock In {\em Advances in Cryptology - ASIACRYPT~2001}, Volume 2248 of {\em
  Lecture Notes in Comput. Sci.}, Gold Coast, Australia, pp.\  157--174.
  Springer.

\bibitem[\protect\citeauthoryear{Gabidulin and Pilipchuk}{Gabidulin and
  Pilipchuk}{2013}]{GP13}
Gabidulin, E. and N.~Pilipchuk (2013).
\newblock {GPT} cryptosystem for information network security.
\newblock In {\em International Conference on Information Society (i-Society
  2013)}, Number~8, pp.\  21--25.

\bibitem[\protect\citeauthoryear{Gabidulin and Pilipchuk}{Gabidulin and
  Pilipchuk}{2014}]{GP14}
Gabidulin, E. and N.~Pilipchuk (2014).
\newblock Modified {GPT} cryptosystem for information network security.
\newblock {\em International Journal for Information Security Research\/}~{\em
  4\/}(8), 937--946.

\bibitem[\protect\citeauthoryear{Gabidulin, Rashwan, and Honary}{Gabidulin
  et~al.}{2009}]{GRH09}
Gabidulin, E., H.~Rashwan, and B.~Honary (2009).
\newblock On improving security of {GPT} cryptosystems.
\newblock In {\em Proc. IEEE Int. Symposium Inf. Theory - ISIT}, pp.\
  1110--1114. IEEE.

\bibitem[\protect\citeauthoryear{Gabidulin}{Gabidulin}{1985}]{Gab85}
Gabidulin, {\`E}.~M. (1985).
\newblock Theory of codes with maximum rank distance.
\newblock {\em Problemy Peredachi Informatsii\/}~{\em 21\/}(1), 3--16.

\bibitem[\protect\citeauthoryear{Gabidulin}{Gabidulin}{2008}]{G08}
Gabidulin, E.~M. (2008).
\newblock Attacks and counter-attacks on the {GPT} public key cryptosystem.
\newblock {\em Des. Codes Cryptogr.\/}~{\em 48\/}(2), 171--177.

\bibitem[\protect\citeauthoryear{Gabidulin and Ourivski}{Gabidulin and
  Ourivski}{2001}]{GO01}
Gabidulin, E.~M. and A.~V. Ourivski (2001).
\newblock Modified {GPT} {PKC} with right scrambler.
\newblock {\em Electron. Notes Discrete Math.\/}~{\em 6}, 168--177.

\bibitem[\protect\citeauthoryear{Gabidulin, Ourivski, Honary, and
  Ammar}{Gabidulin et~al.}{2003}]{GOHA03}
Gabidulin, E.~M., A.~V. Ourivski, B.~Honary, and B.~Ammar (2003).
\newblock Reducible rank codes and their applications to cryptography.
\newblock {\em IEEE Trans. Inform. Theory\/}~{\em 49\/}(12), 3289--3293.

\bibitem[\protect\citeauthoryear{Gabidulin, Paramonov, and Tretjakov}{Gabidulin
  et~al.}{1991}]{GPT91}
Gabidulin, E.~M., A.~V. Paramonov, and O.~V. Tretjakov (1991, April).
\newblock Ideals over a non-commutative ring and their applications to
  cryptography.
\newblock In {\em Advances in Cryptology - EUROCRYPT'91}, Number 547 in Lecture
  Notes in Comput. Sci., Brighton, pp.\  482--489.

\bibitem[\protect\citeauthoryear{Gibson}{Gibson}{1995}]{G95}
Gibson, K. (1995).
\newblock Severely denting the {Gabidulin} version of the {McEliece} public key
  cryptosystem.
\newblock {\em Des. Codes Cryptogr.\/}~{\em 6\/}(1), 37--45.

\bibitem[\protect\citeauthoryear{Gibson}{Gibson}{1996}]{G96}
Gibson, K. (1996).
\newblock The security of the {Gabidulin} public key cryptosystem.
\newblock In U.~Maurer (Ed.), {\em Advances in Cryptology - EUROCRYPT '96},
  Volume 1070 of {\em Lecture Notes in Comput. Sci.}, pp.\  212--223. Springer.

\bibitem[\protect\citeauthoryear{Horlemann{-}Trautmann, Marshall, and
  Rosenthal}{Horlemann{-}Trautmann et~al.}{2017}]{HMR17}
Horlemann{-}Trautmann, A., K.~Marshall, and J.~Rosenthal (2017, Apr).
\newblock Extension of overbeck's attack for gabidulin based cryptosystems.
\newblock {\em Designs, Codes and Cryptography\/}.

\bibitem[\protect\citeauthoryear{Loidreau}{Loidreau}{2007}]{L07}
Loidreau, P. (2007, January).
\newblock {\em {Rank metric and cryptography}}.
\newblock Accreditation to supervise research, {Universit{\'e} Pierre et Marie
  Curie - Paris VI}.

\bibitem[\protect\citeauthoryear{Loidreau}{Loidreau}{2010}]{L10}
Loidreau, P. (2010).
\newblock Designing a rank metric based {McEliece} cryptosystem.
\newblock In N.~Sendrier (Ed.), {\em Post-Quantum Cryptography~2010}, Volume
  6061 of {\em Lecture Notes in Comput. Sci.}, pp.\  142--152. Springer.

\bibitem[\protect\citeauthoryear{McEliece}{McEliece}{1978}]{M78}
McEliece, R.~J. (1978).
\newblock {\em A Public-Key System Based on Algebraic Coding Theory}, pp.\
  114--116.
\newblock Jet Propulsion Lab.
\newblock DSN Progress Report 44.

\bibitem[\protect\citeauthoryear{Niederreiter}{Niederreiter}{1986}]{N86}
Niederreiter, H. (1986).
\newblock Knapsack-type cryptosystems and algebraic coding theory.
\newblock {\em Problems of Control and Information Theory\/}~{\em 15\/}(2),
  159--166.

\bibitem[\protect\citeauthoryear{Otmani and Kalachi}{Otmani and
  Kalachi}{2015}]{OTK15}
Otmani, A. and H.~T. Kalachi (2015).
\newblock Square code attack on a modified sidelnikov cryptosystem.
\newblock In {\em Codes, Cryptology, and Information Security - First
  International Conference, {C2SI} 2015, Rabat, Morocco, May 26-28, 2015,
  Proceedings - In Honor of Thierry Berger}, pp.\  173--183.

\bibitem[\protect\citeauthoryear{Otmani, Kalachi, and Ndjeya}{Otmani
  et~al.}{2018}]{OTN18}
Otmani, A., H.~T. Kalachi, and S.~Ndjeya (2018).
\newblock Improved cryptanalysis of rank metric schemes based on gabidulin
  codes.
\newblock {\em Des. Codes Cryptogr.\/}~{\em 86\/}(9), 1983--1996.

\bibitem[\protect\citeauthoryear{Overbeck}{Overbeck}{2005a}]{O05a}
Overbeck, R. (2005a).
\newblock Extending {G}ibson's attacks on the {GPT} cryptosystem.
\newblock In O.~Ytrehus (Ed.), {\em WCC 2005}, Volume 3969 of {\em Lecture
  Notes in Comput. Sci.}, pp.\  178--188. Springer.

\bibitem[\protect\citeauthoryear{Overbeck}{Overbeck}{2005b}]{O05}
Overbeck, R. (2005b).
\newblock A new structural attack for {GPT} and variants.
\newblock In {\em Mycrypt}, Volume 3715 of {\em Lecture Notes in Comput. Sci.},
  pp.\  50--63.

\bibitem[\protect\citeauthoryear{Overbeck}{Overbeck}{2008}]{O08}
Overbeck, R. (2008).
\newblock Structural attacks for public key cryptosystems based on {Gabidulin}
  codes.
\newblock {\em J. Cryptology\/}~{\em 21\/}(2), 280--301.

\bibitem[\protect\citeauthoryear{Prange}{Prange}{1962}]{P62}
Prange, E. (1962).
\newblock The use of information sets in decoding cyclic codes.
\newblock {\em {IRE} Transactions on Information Theory\/}~{\em 8\/}(5), 5--9.

\bibitem[\protect\citeauthoryear{Rashwan, Gabidulin, and Honary}{Rashwan
  et~al.}{2011}]{RGH11}
Rashwan, H., E.~Gabidulin, and B.~Honary (2011).
\newblock Security of the {GPT} cryptosystem and its applications to
  cryptography.
\newblock {\em Security and Communication Networks\/}~{\em 4\/}(8), 937--946.

\bibitem[\protect\citeauthoryear{Rashwann, Gabidulin, and Honary}{Rashwann
  et~al.}{2010}]{RGH10}
Rashwann, H., E.~Gabidulin, and B.~Honary (2010).
\newblock A smart approach for {GPT} cryptosystem based on rank codes.
\newblock In {\em Proc. IEEE Int. Symposium Inf. Theory - ISIT}, pp.\
  2463--2467. IEEE.

\bibitem[\protect\citeauthoryear{Sendrier}{Sendrier}{2002}]{S02}
Sendrier, N. (2002).
\newblock {Cryptosyst\`emes \`a cl\'e publique bas\'es sur les codes
  correcteurs d'erreurs}.
\newblock In {\em {M\'emoire d'habilitation \`a diriger des recherches,
  Universit\'e Paris 6}}.

\bibitem[\protect\citeauthoryear{Sidelnikov and Shestakov}{Sidelnikov and
  Shestakov}{1992}]{SS92}
Sidelnikov, V.~M. and S.~Shestakov (1992).
\newblock On the insecurity of cryptosystems based on generalized
  {Reed-Solomon} codes.
\newblock {\em Discrete Math. Appl.\/}~{\em 1\/}(4), 439--444.

\bibitem[\protect\citeauthoryear{Silva and Kschischang}{Silva and
  Kschischang}{2009}]{SK09}
Silva, D. and F.~R. Kschischang (2009).
\newblock Fast encoding and decoding of gabidulin codes.
\newblock In {\em {IEEE} International Symposium on Information Theory, {ISIT}
  2009, June 28 - July 3, 2009, Seoul, Korea, Proceedings}, pp.\  2858--2862.
  {IEEE}.

\bibitem[\protect\citeauthoryear{Wachter{-}Zeh}{Wachter{-}Zeh}{2013}]{W13}
Wachter{-}Zeh, A. (2013).
\newblock Bounds on list decoding of rank-metric codes.
\newblock {\em {IEEE} Trans. Inf. Theory\/}~{\em 59\/}(11), 7268--7277.

\end{thebibliography}
\Addresses
\end{document}